\newtheorem{construction}{Construction}
\newtheorem{theorem}{Theorem}[section]
\newtheorem{corollary}[theorem]{Corollary}
\newtheorem{problem}{Question}
\theoremstyle{definition}
\newtheorem{definition}{Definition}[section]
\newtheorem{example}{Example}[section]
\newcommand{\other}{\ensuremath{\mathrm{other}}}
\newcommand{\Exp}{\mathrm{E}}
\newcommand{\bv}{\mathbf{v}}
\newcommand{\bx}{\mathbf{x}}
\newcommand{\bZ}{\mathbb{Z}}
\begin{document}
\title{PIR schemes with small download complexity\\
and low storage requirements\thanks{Parts of this paper were presented at the International Symposium on Information Theory, Aachen, Germany, June 2017}}
\author{Simon R. Blackburn\thanks{Department of Mathematics, Royal Holloway University of London,
Egham, Surrey TW20 0EX, United Kingdom, e-mail: {\tt s.blackburn@rhul.ac.uk}.}\and Tuvi Etzion\thanks{Department of Computer Science, Technion,
Haifa 3200003, Israel, e-mail: {\tt etzion@cs.technion.ac.il}. Part of the
research was performed while the second author visited Royal Holloway University of London under EPSRC
Grant EP/N022114/1. Part of the research was supported by the NSF-BSF grant 2016692.}\and Maura B. Paterson\thanks{Department of Economics, Mathematics and Statistics, Birkbeck, University of London, Malet Street, London WC1E 7HX, United Kingdom, e-mail: \texttt{m.paterson@bbk.ac.uk}.}}
\maketitle

\vspace{-1cm}

\begin{abstract}
In the classical model for (information theoretically secure) Private Information Retrieval (PIR) due to Chor, Goldreich, Kushilevitz and Sudan, a user wishes to retrieve one bit of a database that is stored on a set of $n$ servers, in such a way that no individual server gains information about which bit the user is interested in. The aim is to design schemes that minimise the total communication between the user and the servers. More recently, there have been moves to consider more realistic models where the total storage of the set of servers, or the per server storage, should be minimised (possibly using techniques from distributed storage), and where the database is divided into $R$-bit records with $R>1$, and the user wishes to retrieve one record rather than one bit. When $R$ is large, downloads from the servers to the user dominate the communication complexity and so the aim is to minimise the total number of downloaded bits. Work of Shah, Rashmi and Ramchandran shows that at least $R+1$ bits must be downloaded from servers in the worst case, and provides PIR schemes meeting this bound. Sun and Jafar have considered the download cost of a scheme, defined as the ratio of the message length $R$ and the total number of bits downloaded. They determine the best asymptotic download cost of a PIR scheme (as $R\rightarrow\infty$) when a database of $k$ messages is stored by $n$ servers.

This paper provides various bounds on the download complexity of a PIR scheme, generalising those of Shah et al.\ to the case when the number $n$ of servers is bounded, and providing links with classical techniques due to Chor et al. The paper also provides a range of constructions for PIR schemes that are either simpler or perform better than previously known schemes. These constructions include explicit schemes that achieve the best asymptotic download complexity of Sun and Jafar with significantly lower upload complexity, and general techniques for constructing a scheme with good worst case download complexity from a scheme with good download complexity on average.
\end{abstract}

\section{Introduction}

\subsection{The PIR Model}

In the classical model for private information retrieval (PIR) due to Chor, Goldreich, Kushilevitz and Sudan~\cite{CGKS98},
a database $\mathbf{X}$ is replicated across $n$ servers $S_1,S_2,\dotsc,S_n$.
A user wishes to retrieve one bit of the database, so sends a query to each
server and downloads their reply. The user should be able to deduce the bit
from the servers' replies. Moreover, no single server should gain any information
on which bit the user wishes to retrieve (without collusion). The resulting protocol
is known as an (information-theoretic) \emph{PIR scheme}; there are also computational
variants of the security model~\cite{KuOs97}. The goal of PIR is to minimise
the total communication between the user and the servers.

In practice, the assumption that the user only wishes to retrieve one bit of the database, and the assumption that there is no shortage of server storage seem unrealistic. Because of this, many recent papers assume that the database $\mathbf{X}$ consists
of $k$ {\em records}, each of which is $R$ bits in length, so that the number
of possible databases is $2^{kR}$.  We denote the value of Record $i$
by $X_i$, and we write $X_{ij}$ for the $j$th bit of $X_i$. The aim of the protocol
is for the user to retrieve the whole of $X_i$, rather than a single bit. We also, following Shah, Rashmi and Ramchandran~\cite{SRR14}, drop the assumption that the whole database is replicated across the $n$ servers $S_1,S_2,\ldots,S_n$ and so, for example,
there is the possibility of using techniques from coding theory
in general and from distributed storage codes in particular to reduce the total storage of the scheme.
No restrictions are made on the particular encoding used to distribute the database across the servers other than to assume it is deterministic, {\it i.e.}~that there is a unique way to encode each database. This important generalisation of the model has led to very interesting recent work which we discuss in Subsection~\ref{subsec:context} below.
Our work is a follow-up to~\cite{SRR14} with modifications, improvements, complementary results, simplifications, constructions,
and additional aspects which were not considered in their paper.

More combinatorially, we define a private information retrieval scheme as follows.
\begin{definition}[PIR scheme]
Suppose a database $\mathbf{X}$ is distributed across $n$ servers $S_1,S_2,\dotsc,S_n$.
A user who wishes to learn the value $X_\ell$ of Record $\ell$ submits a {\em query} $(q_1,q_2,\dotsc,q_n)$.
For each $r\in \{1,2,\dotsc, n\},$ server $S_r$ receives $q_r$ and responds with a
value $c_r$ that depends on $q_r$ and on the information stored by $S_r$. The user
receives the {\em response} $(c_1,c_2,\dotsc,c_n)$.  This system is a
{\em private information retrieval (PIR) scheme} if the following two properties are satisfied:
\begin{description}
\item \textbf{(Privacy)} For $r=1,2,\dotsc,n$ the value $q_r$ received by
server $S_r$ reveals no information about which record is being sought.
\item \textbf{(Correctness)} Given a response $(c_1,c_2,\dotsc,c_n)$ to a
query $(q_1,q_2,\dotsc,q_n)$ for Record $\ell$, the user is unambiguously able to recover the value $X_\ell$ of this record.
\end{description}

\end{definition}
Note that while the query is drawn randomly according a pre-specified distribution
on a set of potential queries, the response is assumed to be deterministic.
\begin{example}
\label{ex:1server}
In the case of a single server, a trivial method for achieving PIR is for the user to download the entire $kR$-bit database.
\end{example}
Chor, Goldreich, Kushilevitz and Sudan showed that in the case of single-bit
records ($R=1$),  if there is a single server then PIR is only possible if
the total communication is at least $k$ bits ({\it i.e.}~the size of the
entire database) \cite{CGKS98}, and so the solution above is best possible.
We are interested in finding solutions such as the scheme below,
which transmit significantly fewer than $kR$ bits.

\begin{example}\cite{CGKS98}
\label{ex:chor}
Suppose there are two servers, each storing the entire database. Suppose $R=1$.
\begin{itemize}
\item A user who requires Record $\ell$ chooses a $k$-bit string $(\alpha_1,\alpha_2,\dotsc, \alpha_k)$ uniformly at random.
\item Server 1 is requested to return the value $c_1=\bigoplus_{i=1}^k \alpha_i X_i$, and Server 2 is requested to return $ c_2=\left(\bigoplus_{i=1}^k \beta_i X_i\right)$, where
\[
\beta_i=\begin{cases}
\alpha_i\oplus 1&\text{when }i=\ell,\\
\alpha_i&\text{otherwise.}
\end{cases}
\]
\item The user computes $c_1\oplus c_2$ to recover the value $X_\ell$ of Record $\ell$.
\end{itemize}
\end{example}

The strings $(\alpha_1,\alpha_2,\dotsc, \alpha_k)$ and $(\beta_1,\beta_2,\dotsc, \beta_{k})$ are both uniformly distributed, and are independent of the choice of $\ell$, hence neither server receives any information as to which record is being recovered by the user.

We note that the scheme above works unchanged when the records are $R$-bit strings rather than single bits.
The download complexity of the scheme, in other words the total number of bits
downloaded from the servers, is $2R$. The upload complexity is $2k$, since each server
receives a $k$-bit string from the user. Thus the total communication of the
scheme is $2R+2k$ bits, which is significantly less than $kR$ bits for most parameters.

Note that the upload complexity of this scheme does not depend on $R$, and so is an insignificant
proportion of the total communication when $R$ is large. This is a general phenomenon:
Chan, Ho and Yamamoto~\cite[Remark 2]{CHY15} observe the following. Let $m>1$ be an integer.
Suppose we have an $n$-server PIR scheme for a database of $k$ records,
each $R$ bits long. Suppose the scheme requires $u$ upload bits and $d$ download bits.
Then we can construct an $n$-server PIR scheme for a database of $k$ records,
each $mR$ bits long, which requires $md$ bits of download but still needs just $u$ bits to be uploaded.
Note that when $m$ is large (so records are long) the communication complexity
of the new scheme is dominated by the download complexity of the given scheme.

Because of the observation of Chan et al., it is vital to find PIR schemes with low download complexity.
We formalise the download complexity as follows.

\begin{definition}
A PIR scheme \emph{uses binary channels} if the response $c_j$ sent by server $S_j$ is a binary string
of length $d_j$, where $d_j$ depends only on the query $q_j$ it receives. The \emph{download complexity}
is the maximum of the sum $\sum_{j=1}^n d_j$ over all possible queries $(q_1,q_2,\dotsc,q_n)$.
\end{definition}
So the download complexity is the number of bits downloaded in the worst case. We emphasise that
the length $d_j$ in the definition above does not depend on the database $\mathbf{X}$,
but could depend on the query $q_j$ received by server $S_j$. We note that we allow
for the possibility that $d_j=0$, so the server does not reply to the query.
Finally, we note that if we know that there are more than $2^x$ distinct
possibilities for $c_j$ as the database varies, we may deduce that $d_j\geq x+1$.
Although it is possible to use non-binary channels for PIR as was done recently
(see Subsection~\ref{subsec:context}), we restrict our exposition to PIR schemes
using binary channels in this paper. Most schemes in the literature before the work of~\cite{SRR14} implicitly use this model,
as they use fields of characteristic $2$ and transmit strings of bits.
This restricted model is implicit also in Shah et al.~\cite{SRR14}
when a lower bound on the download complexity is proved. This model is only required
for those results in Section~\ref{sec:lower_bounds} paper that are used to generalise their bound.
Although most results in the other sections can be generalised for the non-binary case we prefer not to do
so for simplicity.

We should comment that, despite the observation of Chan et al., we should not ignore upload complexity completely, as there are scenarios (for example, when $R$ is not so large)
when it might be dominant. Moreover, we cannot compare the difficulty of uploading $2k$ bits with downloading $2R$ bits just by comparing $k$ and $R$, since we know (at least currently) that in practice it takes
much more time to upload a bit than to download one. Of course, we don't know how the speed of
downloading and uploading will change over time. But the obvious consequence of the current situation and future developments is to consider both upload and download complexities separately, and not to ignore one of them completely. This is something that will be done in this paper, although the download complexity will be the main target for optimisation since we generally assume that the size of the database is (considerably) larger than the one bit of the classical PIR model.

We continue to another important measure that has motivated many papers in the last three years, after being introduced by Shah et al.~\cite{SRR14}:
\begin{definition}
Suppose server $S_r$ stores $s_r$ bits of information about the database $\mathbf{X}$.
\begin{itemize}
\item The \emph{per-server storage} of the scheme is $\max\{ s_r\mid r=1,2,\ldots ,n\}$.
\item The \emph{total storage} of the scheme is $\sum_{r=1}^ns_r$.
\item The \emph{storage overhead} of the scheme is the ratio between the total storage and
the total size of the records in the database, i.e. $kR$.
\end{itemize}
\end{definition}

The classical model of PIR ignored storage issues: it was assumed that there is
enough storage to allow the replication of the database at each server. But, with the quantity of information stored today in data centres, storage is an issue today and might
be an important barrier in the future. Thus, it is important to reduce the storage overhead as much
as possible, while keeping reliability, fast access, fast upload and fast download at reasonable levels.
This is the perspective of the current paper, which is concerned with schemes whose download complexity is as small as possible whilst keeping the total storage at reasonable levels.

Finally, it should be noted that although most of the work in this area is theoretical,
there have been notable recent advances in
bridging the gap between theory and practice, e.g.~\cite{GCMSAW16,WYGVZ17}
as we highlight in Subsection~\ref{subsec:context}.

\subsection{Our contributions}

In Section~\ref{sec:lower_bounds}, we provide combinatorial results on the structure of a PIR scheme with small download complexity:
\begin{itemize}
\item We generalise (Theorem~\ref{thm:download}) a key theorem in the foundational paper of Chor et al.~\cite{CGKS98}, and use this result to generalise
the lower bound of $R+1$ on download complexity in~\cite{SRR14}. The results imply (Theorem~\ref{thm:tdownload}) that an $n$-server PIR scheme must have download complexity at least $\frac{n}{n-1}R$ when $k>\lceil R/(n-1)\rceil$. (This last result can also be obtained as a corollary of a recent bound due to Sun and Jafar~\cite{SuJa16b}.) These results provide a bridge between classical PIR and the new models that are assuming the retrieval of long records. Moreover (as often happens with a combinatorial approach), some extra structural information on schemes is provided: see Theorem~\ref{thm:binarydownload}.
\item We provide (Corollary~\ref{cor:onebit}, Theorem~\ref{thm:almostall}) information on the structure of a PIR scheme with minimal download complexity $R+1$. In particular, Theorem~\ref{thm:almostall} provides a rigorous statement of~\cite[Theorem~1]{SRR14}.
\end{itemize}

In Section~\ref{sec:constructions}, we provide various constructions for PIR schemes with low download complexity:
\begin{itemize}
\item In Subsection~\ref{subsec:rplusone}, we provide two simple $(R+1)$-server PIR schemes with download complexity $R+1$. Both schemes have total storage which is quadratic in $R$. The first scheme is a natural generalisation of the scheme of Chor et al.\ given above. The second scheme is a close variant of the quadratic total storage PIR scheme in~\cite{SRR14}, which avoids having to design slightly different schemes depending on the parity of $R$. This second scheme is to be preferred due to its lower upload complexity. (Another, more complex, PIR scheme with download complexity $R+1$ is considered in detail in~\cite{SRR14}. This scheme has small per-server storage,
but requires an exponential (in $R$) number of servers, and so has exponential total storage.)
\item In Subsection~\ref{subsec:nserver}, we describe an $n$-server PIR scheme with download complexity $\frac{n}{n-1}R$. The total storage of the scheme is linear in $R$. This shows that for any $\epsilon>0$ there exists a PIR scheme with linear total storage and download complexity at most $(1+\epsilon)R$. (Schemes with linear total storage, but with download complexity between $2R$ and $4R$, are given in~\cite{SRR14}.)
\item We describe (Subsection~\ref{subsec:perserver}) schemes that provide trade-offs between increasing the number of servers and reducing the per-server storage of the scheme in Subsection~\ref{subsec:nserver}.
\item In Subsection~\ref{subsec:sunjafar}, we provide explicit schemes that achieve optimal asymptotic download cost. The performance of these schemes is equal to the inductively defined schemes in Sun and Jafar~\cite{SuJa16b}, but the description of these schemes is more concise, and the proof that they are indeed PIR schemes is much more straightforward.
\item Finally, in Subsection~\ref{subsec:averaging}, we explain an averaging technique that allows a PIR scheme with good average download complexity to be transformed into a scheme with good download complexity in the worst case.
\end{itemize}

\subsection{Context}
\label{subsec:context}

We end this introduction with a discussion of some of the related literature.
(Many of these papers appeared after the conference version of our paper~\cite{BEM17}
was posted. We omit results submitted after our submission of this paper.)

Private information retrieval was introduced in~\cite{CGKS98}, and has been
an active area ever since. See, for example, Yekhanin~\cite{Y10} for a fairly recent survey.

The papers by Shah et al.~\cite{SRR14} and (independently) by Augot, Levy-Dit-Vahel, and Shikfa~\cite{ALS14}
are the first to consider PIR models where the information stored
in the servers could be coded using techniques from distributed storage.
Whereas~\cite{SRR14} is mainly concerned with download complexity, and also with total storage
(with per-server storage, and query size also relevant parameters),
the authors of~\cite{ALS14} emphasise measures of robustness against
malicious servers, namely decoder locality and PIR locality.

More recently, the literature has addressed several parallel and related
issues, which can be categorised as follows:

\begin{enumerate}
\item Papers dealing with the download complexity, rate, and capacity of PIR schemes.

\item Research which attempts to reduce the storage overhead of PIR schemes.

\item Papers which present coding techniques, based on various error-correcting codes,
e.g. MDS codes, to store the database in a distributed fashion.

\item Papers which consider PIR schemes in the presence of unreliable servers. Servers might be colluding (so they have access to more than one query $q_r$), they might fail (and so do not reply with a value $c_r$), they might be adversarial (replying with incorrect values $c_r$), they might be unsynchronised (storing slightly different copies of the database) and so on.
\item Research which aims to build PIR schemes into previously known architectures for distributed storage.
\item Papers dealing with other PIR models, for example allowing broadcasting of some information, or allowing the user to possess side information such as the value of some records.
\end{enumerate}
Clearly, these issues are related, and a given paper might address aspects of more than one of these topics.

In early papers, Fanti and Ramchandran~\cite{FaRa14,FaRa15} considered unsynchronized databases; the results
are the same as for synchronized PIR at the expense of probabilistic success for
information retrieval, and the use of two rounds of communication. We are not aware of recent work in this model, but we mention in this context the work of Tajeddine and El Rouayheb~\cite{TaElR17}
which considers PIR schemes in the presence of some servers which do not respond to a query.

In a sequence of papers, Sun and Jafar~\cite{SuJa16b,SuJa16e,SuJa16c,SuJa16f,SuJa16d} consider
the capacity of the channels related to PIR codes in various models. (The \emph{rate} of a PIR scheme is the ratio of $R$ and the download complexity, and the \emph{capacity} is the supremum of achievable
rates.) In the model for PIR we consider, they use information theoretic techniques to show~\cite{SuJa16b}
that an $n$-server PIR scheme on a $k$ message database has rate at most
\[
\left(1-\frac{1}{n}\right)\left(1-\left(\frac{1}{n}\right)^k\right)^{-1}.
\]
Their model is restricted to the special case of replication.
They also provide a scheme that attains this rate. The messages in their scheme
are extremely long for most values of $n$ and $k$: the message length must
be a multiple of $n^k$. Because of this, the scheme can be thought of as being
tailored for the situation when $R\rightarrow\infty$. Their results show that
when $R\rightarrow\infty$ with $n$ and $k$ fixed, there are schemes whose download
complexity (and so whose communication complexity) has a leading term of the form
\[
\frac{n}{n-1}\left(1-\left(\frac{1}{n}\right)^k\right)R,
\]
and that this term is best possible. (We give an explicit scheme with the same download complexity in Subsection~\ref{subsec:sunjafar}.)

The results in~\cite{SuJa16b} have been generalised to the case when some of the servers
collude. Sun and Jafar~\cite{SuJa16c} find the capacity of the channel in this more general
case. (The results in~\cite{SuJa16b} can be thought of as the special case where each server can collude only with itself.)
The capacity for the symmetric PIR model, where the user
who retrieves a message will get no information about the other messages in the database,
is determined in~\cite{SuJa16d}. The optimal download complexity in the situation
when the messages in the database might be of an arbitrary length (subject to a certain divisibility condition
with all messages having the same length) is considered
in~\cite{SuJa16e}. The most recent in this sequence of papers considers an interactive model, where
a user can have several rounds of queries, the queries in a given round are allowed to depend on answers from previous rounds. Moreover, colluding servers are considered in this model.  It is proved~\cite{SuJa16f} that for this case there is no change in the capacity, but that the storage overhead can sometimes be improved.

Banawan and Ulukus~\cite{BaUl16} also generalise the results of Sun and Jafar~\cite{SuJa16b},
finding the exact capacity of the PIR scheme when the database is encoded with
a linear code. Another generalisation due to Banawan and Ulukus~\cite{BaUl17a,BaUl17} is
to the scenario that the user is allowed to request a few records in one round of queries.
They provide capacity computations and schemes for this scenario.
A similar case was also discussed in~\cite{ZhXu17}.
Finally, Banawan and Ulukus~\cite{BaUl17b} consider the capacity of PIR schemes
in the scenario where servers might not be synchronised,
there might be adversarial errors, and some servers might collude. They
compute the capacity when some or all of these events might occur.
Wang and Skoglund~\cite{WaSk16} consider the capacity of a symmetric
PIR scheme when the database is stored in a distributed fashion using an MDS code.

Chan, Ho, and Yamamoto~\cite{CHY14,CHY15} consider the trade-off between the total storage
and the download complexity when the size of a record is large;
the trade-off depends on the number of records in the system.
They also consider the case where the database is encoded with an MDS code.

Fazeli, Vardy, and Yaakobi~\cite{FVY15,FVY15a} give a method to reduce
the storage overhead based on any known PIR scheme which uses replication.
Their method reduces the storage overhead considerably, without affecting
the order of the download complexity or upload complexity of the overall scheme, by simulating the original scheme on a larger number of servers.
Their key concept is an object they call a $\kappa$-PIR code
(more generally a $\kappa$-PIR array code), where
$\kappa$ is the number of servers used in the originally known PIR scheme, which controls how a database can be divided into parts and encoded within servers to allow a trade-off between the number of servers
and the storage overhead. In particular, for all $\epsilon>0$, they show that there exist good schemes
(in terms of communication requirements) where the amount of information stored in a server
is bounded but the total storage is at most $(1+\epsilon)$ times the database size.
Rao and Vardy~\cite{RaVa16} study PIR codes further, establishing the asymptotic
behavior of $\kappa$-PIR codes. Vajha, Ramkumar, and Kumar~\cite{VRK17,VRK17a}
find the redundancy of such codes for $\kappa =3,~4$ by using Reed-Muller codes. Lin and Rosnes~\cite{LiRo17}
show how to shorten and lengthen PIR codes, and find the redundancy of such codes for $\kappa =5,~6$.
Blackburn and Etzion~\cite{BE16,BlEt17} consider the optimal ratios between $\kappa$-PIR array codes
and the actual number of servers used in the system. Zhang, Wang, Wei, and Ge~\cite{ZWWG16} consider
these ratios further, and improve some of the results from~\cite{BE16,BlEt17}.
We remark that though it is possible to reduce the storage overhead using the techniques of PIR array codes,
it seems impossible to reduce the download complexity of the resulting schemes below $(3/2)R$
(and most codes give download complexity close to $2R$) because of restrictions on the PIR rate
of such codes. It is interesting to note that Augot, Levy-Dit-Vahel, and Shikfa~\cite{ALS14}
constructed PIR schemes by partitioning the database into smaller parts, as done
later in~\cite{FVY15,FVY15a}, to
reduce the storage overhead. But they applied this technique only to a certain family of multiplicity codes, and the parts of the partition were not encoded as in~\cite{FVY15,FVY15a}.

Fazeli, Vardy, and Yaakobi~\cite{FVY15a} remark that the concept of a $\kappa$-PIR code
is closely related to codes with locality and availability. Such codes were studied first
by Rawat, Papailiopoulos, Dimakis, and Vishwanath~\cite{RPDV14b,RPDV16} and later also
by others, for example~\cite{FFGW17,HYUP15}. A new subspace approach for such codes was given recently in~\cite{SES17a,SES17}. Another family of related codes with similar properties are batch codes, which were first defined
by Ishai, Kushilevitz, Ostrovsky, and Sahai~\cite{IKOS04} and were recently studied
by many others, for example~\cite{AsYa17,AsYa18,RSDG16}.
It is important to note that all these codes are very important in the theory
of distributed storage codes. This connection between the concepts of locality and PIR codes
are explored in~\cite{FFGW17}.

Error-correcting codes, and in particular maximum distance separable (MDS) codes, have been considered by many authors in various
PIR models. It is natural to consider MDS codes, as they are very often used in various types of
distributed storage codes (especially for locally repairable codes~\cite{GHSY12}
and regenerating codes~\cite{DGWWR10,DRWS11}), and we expect that
the servers in our PIR scheme will be part of a distributed storage system. We will now mention various examples.

Colluding or malicious servers in PIR have been much studied over the last two years.
Tajeddine and El Rouayheb~\cite{TaElR16a} consider PIR schemes
where the information is stored using MDS codes.
Their PIR scheme based on the coded MDS
achieves a retrieval rate $1-R$, where $R$ is the code rate of the storage system.
They attain the bounds for linear schemes in~\cite{CHY14,CHY15},
in the situation when one or two `spies' (colluding and/or malicious servers) are present.
In the case of one spy (no collusion) a generalisation to any linear code
with rate greater than half was given in~\cite{KRA16}.
Freij-Hollanti, Gnilke, Hollanti, and Karpuk~\cite{FGHK16} give a PIR scheme coded with
an MDS code which can be adjusted (by varying the rate of the MDS code) to combat against
larger numbers of colluding servers. This scheme also attains the asymptotic bound
on the related capacity of such a PIR scheme in the extreme cases, where
there are no colluding servers or when the database is replicated, i.e. no coding is applied. This idea is generalised
in~\cite{TGKFHR17}. The results
in the latter paper are analysed (and one conjecture disproved)
by Sun and Jafar~\cite{SuJa17a,SuJa17}.
Another scheme based on MDS codes which can combat large number of colluding servers
is given by Zhang and Ge~\cite{ZhGe17a}. A generalisation to the case where the user wants to
retrieve several files is given by the same authors in~\cite{ZhGe17b}.
Wang and Skoglund~\cite{WaSk17a} consider a symmetric PIR scheme using an MDS code,
in which the user can retrieve the information about the file he wants,
but can gain no information about the other files. This scheme attains
the bound on the capacity which they derive earlier in~\cite{WaSk16}.
They have extended their work to accommodate colluding servers in~\cite{WaSk17b}.

PIR can be combined with other applications in storage and communication
in many ways. One example is a related broadcasting scheme in~\cite{KSCF17}.
Another example is cache-aided PIR, considered by Tandon~\cite{Tan17}. In this setup the user is equipped with a local cache which is formed from an arbitrary
function on the whole set of messages, and this local cache is known to the servers.
The situation when this cache is not known to the servers is considered
by Kadhe, Garcia, Heidarzadeh, El Rouayheb, and Sprintson~\cite{KGHRS17}. Since the user has side information in these models,
the problem is closely related to index coding~\cite{BBJK11} a topic which is also of great interest.

While most of the work in this area is theoretical, there have been notable recent advances in
bridging the gap between theory and practice. For example, the recent paper~\cite{GCMSAW16} reports on the design
and implementation of a scalable and private media delivery system --- called \emph{Popcorn} --- that
explicitly targets Netflix-like content distribution. Another practical system for private queries on
public datasets --- called \emph{Splinter} --- is currently in development~\cite{WYGVZ17}.
This system has been reported to achieve latencies below 1.20 seconds for realistic
workloads including a Yelp clone, flight search, and map routing.

\section{Optimal download complexity}
\label{sec:lower_bounds}

In this section, we give structural results for PIR schemes with optimal download complexity, given that the database consists of $k$ records of length $R$. For some of the results, we also assume that the PIR scheme involves $n$ servers, where $n$ is fixed.

In Subsection~\ref{subsec:lower_bounds} we generalise a classical result in Private Information Retrieval due to Chor et al. We use this result to provide an alternative proof of the theorem of
Shah, Rashmi and Ramchandran~\cite{SRR14} that a PIR scheme must have download complexity at least $R+1$ when $k\geq 2$, and to prove a lower bound of $\frac{n}{n-1}R$ for the download complexity of an $n$-server PIR scheme whenever $k$ is sufficiently large. In Section~\ref{subsec:R_plus_one} we present more precise structural results when the download complexity of a PIR scheme attains the optimal value of $R+1$ bits.

\begin{definition}
We say that a response $(c_1,c_2,\dotsc,c_n)$ is {\em possible} for a query $(q_1,q_2,\dotsc,q_n)$ if there exists a database $\mathbf{X}$  for which $(c_1,c_2,\dotsc,c_n)$ is returned as the response to the query $(q_1,q_2,\dotsc,q_n)$ when $\mathbf{X}$ is stored by the servers.
\end{definition}

\subsection{Lower bounds on the download complexity}
\label{subsec:lower_bounds}

We aim to generalise the following theorem, which was proved by Chor et al.\ in the very first paper on PIR~\cite[Theorem~5.1]{CGKS98}:

\begin{theorem}
\label{thm:chor}
A  PIR scheme that uses a single server for a database with $k$ records of size one bit is not possible unless the number of possible responses from the server to any given query is at least $2^k$.
\end{theorem}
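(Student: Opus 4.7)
The plan is to exploit the privacy and correctness conditions to show that, for every query $q$ that is sent with positive probability, the map $\mathbf{X}\mapsto c(q,\mathbf{X})$ from the set of databases to the set of responses is injective; since there are $2^k$ databases, this forces at least $2^k$ possible responses.

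First I would fix a query $q$ that occurs with positive probability, together with some record index $\ell_0$ for which $q$ arises when the user wishes to retrieve $X_{\ell_0}$. By the privacy condition, the marginal distribution of the query sent to the single server does not depend on the target record, so $q$ also occurs with positive probability when the user wishes to retrieve any other record $\ell\in\{1,2,\dotsc,k\}$. In particular, for every $\ell$, the pair (query $q$, target $\ell$) is a legitimate situation in which the user must be able to decode.

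Next I would apply correctness. Since the user knows both $q$ (which it generated) and $\ell$ (which it chose), correctness says that for every $\ell$ there is a function $f_\ell$ such that $f_\ell(q,c(q,\mathbf{X}))=X_\ell$ for every database $\mathbf{X}$. Now suppose $\mathbf{X}$ and $\mathbf{X}'$ are two databases with $c(q,\mathbf{X})=c(q,\mathbf{X}')$. Then for every $\ell$,
\[
X_\ell \;=\; f_\ell(q,c(q,\mathbf{X})) \;=\; f_\ell(q,c(q,\mathbf{X}')) \;=\; X'_\ell,
\]
so $\mathbf{X}=\mathbf{X}'$. Hence the map $\mathbf{X}\mapsto c(q,\mathbf{X})$ is injective on the set of $2^k$ possible databases, and the number of possible responses to $q$ is at least $2^k$, as required.

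The only subtlety, and the step that deserves the most care in writing, is the invocation of privacy to conclude that a query $q$ that is legitimate for retrieving some record is also legitimate for retrieving every record; without this, correctness would only guarantee the existence of a decoding function $f_{\ell_0}$ for the particular $\ell_0$ associated with $q$, which is not enough to separate arbitrary pairs of databases. Once this point is clear, the remainder is essentially a one-line pigeonhole argument and no calculation is needed.
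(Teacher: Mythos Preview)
Your proof is correct and is essentially the same as the paper's approach: the paper derives Theorem~\ref{thm:chor} as the special case $d=0$, $R=1$ of Theorem~\ref{thm:download}, whose proof uses exactly the two ingredients you identify---privacy forces any query $q_1$ that is legitimate for one record to be legitimate for every record, and then correctness forces the response to determine every record, so distinct databases must give distinct responses. The only cosmetic difference is that you phrase it as a direct injectivity argument while the paper (in the general setting) argues by contradiction via pigeonhole.
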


Our generalisation shows a server must reply with at least $k(R-d)$ bits of download, if no more than a total of $d$ bits (where $0\leq d \leq R$) are downloaded from the other servers. We state our generalisation as follows. Without loss of generality we will focus on server $S_1$, so for ease of notation we will denote the tuple $(q_1,q_2,\dotsc,q_n)$ by $(q_1,q_\other)$, and $(c_1,c_2,\dotsc, c_n)$ by $(c_1,c_\other)$.

\begin{theorem}\label{thm:download} Suppose $0\leq d \leq R$.  Let $q_1$ be fixed. Suppose we have a PIR scheme with the property that for any query of the form $(q_1,q_\other)$, we have
\begin{equation*}
|\{c_\other\mid\text{$\exists c_1$ such that $(c_1,c_\other)$ is possible for $(q_1,q_\other)$}\}|\leq 2^{d}.
\end{equation*}
Then for any query $(q_1,q'_\other)$ we have
\begin{equation*}
|\{c_1\mid\text{$\exists c_\other$ such that $(c_1,c_\other)$ is possible for $(q_1,q'_\other)$}\}|\geq 2^{k(R-d)}.
\end{equation*}
\end{theorem}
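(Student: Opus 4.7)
The plan is to mimic the logic of Theorem~\ref{thm:chor} (the classical Chor et al.\ bound), but with a twist: exploit privacy to fix one common $q_1$ across all $k$ possible record choices, and then use the small $c_\other$-image hypothesis together with correctness to bound the $c_1$-image from below. The core observation is that once $q_1$ is fixed, the server $S_1$'s response is a deterministic function of the database alone, so ``the set of possible $c_1$'' does not depend on which $q'_\other$ we extend $q_1$ by; the bound I want to prove really concerns this one function of $\mathbf{X}$.

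First I would invoke the privacy axiom: since the marginal distribution of $q_1$ on $S_1$ is independent of the record being retrieved, for each $\ell\in\{1,2,\dotsc,k\}$ there exists some $q_\other^{(\ell)}$ such that the full query $(q_1,q_\other^{(\ell)})$ is sent with positive probability when the user wants $X_\ell$. Define, for our fixed $q_1$, the function $f(\mathbf{X})=c_1$ giving $S_1$'s deterministic response, and for each $\ell$ define $g_\ell(\mathbf{X})=c_\other$ as the deterministic response of the remaining servers when the query is $(q_1,q_\other^{(\ell)})$ and the database is $\mathbf{X}$. Correctness of the scheme on record $\ell$ for the query $(q_1,q_\other^{(\ell)})$ says exactly that the pair $(f(\mathbf{X}),g_\ell(\mathbf{X}))$ determines $X_\ell$. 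Hence the combined map
\[
\Phi:\mathbf{X}\longmapsto \bigl(f(\mathbf{X}),g_1(\mathbf{X}),g_2(\mathbf{X}),\dotsc,g_k(\mathbf{X})\bigr)
\]
recovers $(X_1,X_2,\dotsc,X_k)=\mathbf{X}$ from its output, so $\Phi$ is injective on the set of $2^{kR}$ databases.

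Now I apply the hypothesis to each $q_\other^{(\ell)}$: the image of $g_\ell$ is contained in $\{c_\other : \exists c_1,\ (c_1,c_\other) \text{ possible for }(q_1,q_\other^{(\ell)})\}$, which has size at most $2^d$. A counting bound on the image of $\Phi$ then gives
\[
2^{kR}=|\{\mathbf{X}\}|=|\operatorname{im}\Phi|\leq |\operatorname{im} f|\cdot\prod_{\ell=1}^{k}|\operatorname{im} g_\ell|\leq |\operatorname{im} f|\cdot 2^{kd},
\]
so $|\operatorname{im} f|\geq 2^{k(R-d)}$. Finally, since $S_1$'s response depends only on $q_1$ and $\mathbf{X}$, the set of $c_1$ that are possible for any query $(q_1,q'_\other)$ is precisely $\operatorname{im} f$, yielding the claimed lower bound.

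The only subtle step is the use of privacy to guarantee simultaneously that the same $q_1$ can legitimately arise in conjunction with each record $\ell$; without this, one could not form $g_\ell$ for every $\ell$ and the counting would collapse. Everything else is a direct consequence of correctness (to force $\Phi$ to be injective) and the deterministic nature of responses (to ensure $f$ and the $g_\ell$ are well-defined functions of $\mathbf{X}$). I do not expect any computational obstacle; the work is entirely in choosing the right function setup so that ``small $c_\other$-image'' and ``large $c_1$-image'' become two sides of an injectivity count.
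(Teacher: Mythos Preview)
Your proof is correct, and it takes a genuinely different route from the paper's. The paper argues by contradiction: assuming $|\operatorname{im} f|<2^{k(R-d)}$, it uses the pigeonhole principle to find a single value $c_1^\ast$ whose fibre $T=f^{-1}(c_1^\ast)$ has more than $2^{kd}$ databases, then a second pigeonhole to locate one particular record $\ell$ taking more than $2^d$ values on $T$, and finally derives a contradiction with correctness by noting that $(c_1^\ast,g_\ell(\mathbf{X}))$ cannot distinguish all these values. You instead build the joint map $\Phi=(f,g_1,\dotsc,g_k)$ using \emph{all} records simultaneously, observe that correctness forces $\Phi$ to be injective, and obtain $|\operatorname{im} f|\geq 2^{kR}/\prod_\ell|\operatorname{im} g_\ell|\geq 2^{k(R-d)}$ by a direct product bound. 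Your argument is shorter and avoids both the contradiction frame and the two nested pigeonhole steps; the paper's version stays closer in spirit to the original Chor et al.\ proof and makes the role of a single ``bad'' record more visible, which can be helpful intuition. Both rely on the same three ingredients you identified: privacy (so the fixed $q_1$ extends to a legitimate query for every record), determinism of responses (so $f$ and the $g_\ell$ are functions of $\mathbf{X}$), and correctness (so each pair $(f,g_\ell)$ determines $X_\ell$).
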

We remark that Theorem~\ref{thm:chor} is the case $d=0$ and $R=1$ of Theorem~\ref{thm:download}.
\begin{proof}
Let $q_1$ be fixed, and suppose we have a PIR scheme with the property that for any query $(q_1,q_\other)$
\begin{equation}
\label{eqn:reply}
|\{c_\other\mid\text{$\exists c_1$ such that $(c_1,c_\other)$ is possible for $(q_1,q_\other)$}\}|\leq 2^{d}.
\end{equation}
Assume, for a contradiction, that there exists a query $(q_1,q_\other^\ast)$ for which
\begin{equation*}
|\{c_1\mid\text{$\exists c_\other$ such that $(c_1,c_\other)$ is possible for $(q_1,q_\other^\ast)$}\}|<2^{k(R-d)}.
\end{equation*}
Suppose this query is for Record~$i$.

Let $c_1^\ast$ be a most common reply of $S_1$ to $(q_1,q_\other^\ast)$ as the database varies over all possibilities. So we choose $c_1^\ast$  to maximise $|T|$, where $T$ is the set of databases where $S_1$ replies with $c_1^*$ to the query $(q_1,q_\other^\ast)$. If server $S_1$ receives the query $q_1$, it will thus return $c_1^\ast$ whenever a database in $T$ is being stored. There are $2^{kR}$ databases, and less than $2^{k(R-d)}$ possibilities for the reply $c_1$ of $S_1$ to the query $(q_1,q_\other^\ast)$. So by the pigeonhole principle, $|T|>2^{kR}/2^{k(R-d)}=2^{kd}$.

Since the databases consist of $k$ records, the fact that $|T|>2^{kd}$ implies the existence of a record, say Record $\ell$, for which the number of distinct values $X_\ell$ that appear among the databases in $T$ is greater than $2^d$.  Thus we can choose a subset of $2^d+1$ databases $W\subseteq T$ such that the values $X_\ell$ of Record~$\ell$ in the databases in $W$ are all distinct.

The requirement for privacy against server $S_1$ implies that there exists a query for Record $\ell$ of the form $(q_1,q_\other^\ell)$, since otherwise $S_1$ could distinguish between queries for Record $i$ and Record $\ell$.

Suppose the query $(q_1,q_\other^\ell)$ for Record $\ell$ is made, and suppose that the database lies in $W$. Server $S_1$ receives $q_1$, and so (since $W\subseteq T$) replies with $c_1^*$. But there are at most $2^d$ possible replies $c_\other^\ell$ from the remaining servers by~\eqref{eqn:reply}, and so there are at most $2^d$ responses $(c_1^\ast,c_\other^\ell)$ to the query $(q_1,q_\other^\ell)$. Since $|W|=2^d+1$, there are two databases $\mathbf{X},\mathbf{Y}\in W$ such that the servers respond identically. But this is our required contradiction, since $\mathbf{X}$ and $\mathbf{Y}$ have distinct values for Record~$\ell$ and the query was for this record.
\end{proof}

The following theorem is a key consequence of Theorem~\ref{thm:download}.

\begin{theorem}\label{thm:binarydownload}
Let $x$ be non-negative, and suppose we have a PIR scheme that has download complexity at most $R+x$. If the database contains $k$ records, where $k\geq x+2$, then the number of bits downloaded from any server is at most~$x$.
\end{theorem}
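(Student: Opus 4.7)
The plan is to argue by contradiction, using Theorem~\ref{thm:download} to convert an assumed excess in one server's download into a lower bound on that same server's download that is too large to be consistent with the overall bound $R+x$.

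First I would fix a server, say $S_1$ without loss of generality, and suppose for contradiction that some query $q_1$ gives $d_1(q_1) = a \geq x+1$. The case $a > R+x$ is immediate: one server alone would exceed the total download complexity. So I can assume $x+1 \leq a \leq R+x$, which means $R+x-a$ lies in the range $[0, R-1]$ and in particular is a legal value for the parameter $d$ in Theorem~\ref{thm:download}.

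Next, since the download complexity bound $R+x$ must hold for \emph{every} query, for any choice of $q_\other$ the remaining servers' responses lie in $\{0,1\}^{d_2(q_2) + \cdots + d_n(q_n)}$ where the exponent is at most $R+x-a$. Hence the set of possible $c_\other$ for queries of the form $(q_1, q_\other)$ has size at most $2^{R+x-a}$. Applying Theorem~\ref{thm:download} with this $q_1$ and with $d = R+x-a$ yields, for any $q'_\other$,
\begin{equation*}
|\{c_1 \mid \exists c_\other \text{ such that } (c_1,c_\other) \text{ is possible for } (q_1,q'_\other)\}| \geq 2^{k(R-d)} = 2^{k(a-x)}.
\end{equation*}

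Since $c_1$ is a binary string of length $d_1(q_1) = a$, there are at most $2^a$ possible values of $c_1$, so $a \geq k(a-x)$, which rearranges to $a(k-1) \leq kx$, i.e.\ $a \leq x + \tfrac{x}{k-1}$. Using the hypothesis $k \geq x+2$, we have $\tfrac{x}{k-1} \leq \tfrac{x}{x+1} < 1$, so $a < x+1$. Since $a$ is an integer this forces $a \leq x$, contradicting our assumption $a \geq x+1$. The argument is essentially a one-shot application of the generalised Chor-style bound, so there is no deep obstacle; the only subtlety to watch is confirming that the chosen $d = R+x-a$ satisfies $0 \leq d \leq R$ so that Theorem~\ref{thm:download} actually applies, and separating off the trivial sub-case $a > R+x$.
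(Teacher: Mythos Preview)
Your proof is correct and follows essentially the same approach as the paper: assume some server downloads $a\geq x+1$ bits, apply Theorem~\ref{thm:download} with the remaining download budget as the parameter $d$, and derive a contradiction with the bound $R+x$. The only cosmetic difference is in the bookkeeping---the paper takes $d$ to be the actual worst-case download from the other servers (bounded by $R-1$) and shows $k(R-d)+d\geq R+k-1\geq R+x+1$, whereas you set $d=R+x-a$ directly and rearrange to $a\leq x+\frac{x}{k-1}<x+1$; these are equivalent manipulations of the same inequality.
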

\begin{proof}
Without loss of generality, consider the server $S_1$. Suppose for a contradiction that there exists a query $q_1$ so that at least $x+1$ bits are downloaded from $S_1$ (and so at most $(R+x)-(x+1)=R-1$ bits are downloaded from the other servers). Suppose that a total of $d$ bits are downloaded from the other servers in the worst case when $S_1$ receives~$q_1$. So $d\leq R-1$. Theorem~\ref{thm:download} implies that at least $k(R-d)$ bits are downloaded from $S_1$, and so at least $k(R-d)+d$ bits are downloaded from the servers in the worst case. But $d\leq R-1$ and $k\geq x+2$, so
\[
k(R-d)+d=kR-(k-1)d
\geq kR-(k-1)(R-1)=R+k-1\geq R+(x+2)-1
=R+x+1,
\]
which is impossible as the scheme has total download complexity $R+x$. This contradiction establishes the theorem.
\end{proof}

We are now in a position to provide a new short proof of the following corollary. The corollary is due to Shah et al.~\cite{SRR14}.

\begin{corollary}
\label{cor:rplusone}
Let the database contain $k$ records with $k\geq 2$. Any PIR scheme requires a total download of at least $R+1$ bits.
\end{corollary}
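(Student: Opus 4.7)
The plan is to derive the corollary almost immediately from Theorem~\ref{thm:binarydownload}. Suppose, for a contradiction, that there is a PIR scheme with $k\geq 2$ records of length $R$ whose total download complexity is at most $R$. Write this bound as $R+x$ with $x=0$; the hypothesis $k\geq 2 = x+2$ of Theorem~\ref{thm:binarydownload} is then satisfied.

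Applying Theorem~\ref{thm:binarydownload}, I get that for every server $S_r$ and every query $q_r$, the number of bits downloaded from $S_r$ is at most $x=0$, so $S_r$ returns an empty response. Summing over the servers, the total download is $0$ bits in the worst case (indeed, in every case).

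For $R\geq 1$, this immediately contradicts correctness: the user must be able to recover $X_\ell$, which can take $2^R\geq 2$ distinct values as the database varies, yet receives no bits whatsoever. The case $R=0$ is vacuous since $R+1=1$ and any reply of total length at least~$1$ beats the trivial bound (or one may simply assume $R\geq 1$, as otherwise the database is empty and there is nothing to retrieve). This completes the proof. The only subtle point is the indexing of Theorem~\ref{thm:binarydownload}: one has to check that the quantification ``at most $x$ bits are downloaded from any server'' applies uniformly over all queries, so that summing over servers gives the overall worst-case download bound; this is already built into the statement of that theorem, so no additional work is required. There is no real obstacle, as the heavy lifting has been done by Theorem~\ref{thm:download} and its corollary Theorem~\ref{thm:binarydownload}.
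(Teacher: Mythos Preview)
Your proof is correct and follows essentially the same approach as the paper: both argue by contradiction, apply Theorem~\ref{thm:binarydownload} with $x=0$ (using $k\geq 2=x+2$) to conclude that no server returns any bits, and observe that this violates correctness. The paper's version is slightly terser and omits the $R=0$ discussion and the meta-commentary about quantification, but the substance is identical.
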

\begin{proof}
Suppose we have a scheme with total download of $R$ or fewer bits. Theorem~\ref{thm:binarydownload} with $x=0$ implies that $0$ bits are downloaded from each server, and so the user receives no information about the desired record. Hence such a scheme cannot exist.
\end{proof}

The following theorem (which can also be derived from the results in~\cite{SuJa16b}), improves the bound of Corollary~\ref{cor:rplusone} when $n< R+1$ and $k$ is sufficiently large.

\begin{theorem}\label{thm:tdownload}
Suppose a PIR scheme involves $n$ servers, where $n\geq 2$. Suppose the database contains $k$ records, where $k\geq \lceil \frac{1}{n-1}R\rceil+1$. Then the download complexity of the scheme is at least $\frac{n}{n-1}R$ bits.
\end{theorem}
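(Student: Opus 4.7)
The plan is to derive Theorem~\ref{thm:tdownload} as an almost immediate consequence of Theorem~\ref{thm:binarydownload}. I would proceed by contradiction, assuming that the download complexity $D$ is an integer satisfying $D < \frac{n}{n-1}R$, and aim to violate the per-server bound supplied by Theorem~\ref{thm:binarydownload}. The strategy is to reparameterise the assumed violation as $D = R + x$, verify that the hypothesis on $k$ makes Theorem~\ref{thm:binarydownload} applicable with this $x$, and then observe that the resulting bound contradicts $D = R + x$.

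First, since $k \geq \lceil R/(n-1)\rceil + 1 \geq 2$, Corollary~\ref{cor:rplusone} gives $D \geq R + 1$, so $x := D - R$ is a positive integer. The assumption $D < R + R/(n-1)$, together with the integrality of $x$, forces $x \leq \lceil R/(n-1)\rceil - 1$; this is verified by treating separately the cases in which $R/(n-1)$ is or is not an integer. Consequently
\[
x + 2 \;\leq\; \lceil R/(n-1)\rceil + 1 \;\leq\; k,
\]
so the hypotheses of Theorem~\ref{thm:binarydownload} are met with this value of $x$, and each server responds with at most $x$ bits to any query. Summing over the $n$ servers gives $D \leq nx$, hence $R + x = D \leq nx$, i.e.\ $(n-1)x \geq R$, which contradicts $x < R/(n-1)$.

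The only point requiring a little care is the integer-arithmetic step showing $x + 2 \leq k$: the hypothesis $k \geq \lceil R/(n-1)\rceil + 1$ has been precisely calibrated for this purpose, which is why the bound appears with a ceiling. Beyond that, there is no substantive technical obstacle: once the assumed violation of the target bound is written as $D = R + x$ with $x$ a positive integer, Theorem~\ref{thm:binarydownload} closes the argument in a single line.
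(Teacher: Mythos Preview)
Your proposal is correct and follows essentially the same route as the paper's proof: both arguments write the assumed download complexity as $R+x$ with $x<\frac{R}{n-1}$ an integer, use the hypothesis on $k$ to invoke Theorem~\ref{thm:binarydownload}, and then sum the per-server bound over the $n$ servers to obtain the contradiction $(n-1)x\geq R$. Your version is slightly more careful in explicitly invoking Corollary~\ref{cor:rplusone} to ensure $x\geq 1$ (and hence that Theorem~\ref{thm:binarydownload}, which requires $x\geq 0$, applies), a point the paper leaves implicit.
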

\begin{proof}
Assume for a contradiction that the scheme has download complexity $R+x$, where $x$ is an integer such that $x<\frac{1}{n-1}R$. Since $x\leq \lceil \frac{1}{n-1}R\rceil -1$, we see that $k\geq x+2$ and so Theorem~\ref{thm:binarydownload} implies that the number of bits downloaded by any server is at most $x$. Since we have $n$ servers, the total number of bits of download is always at most $xn$. Since our scheme has download complexity $R+x$, there is a query where a total of $R+x$ bits are downloaded from servers. Hence we must have that $nx\geq R+x$, which implies that $x\geq \frac{1}{n-1}R$. This contradiction establishes the result.
\end{proof}

\subsection{Download complexity $R+1$}
\label{subsec:R_plus_one}

The final two results of this section concentrate on the extreme case when the download complexity is exactly $R+1$. Recall that the download complexity is a worst case measure: every query results in at most $R+1$ bits being downloaded, and there exists a query where $R+1$ bits are downloaded.

\begin{corollary}
\label{cor:onebit}
Let the database contain $k$ records with $k\geq 3$. Any PIR scheme with a total download of exactly $R+1$ bits requires 1 bit to be downloaded from each of $R$ or $R+1$ different servers in response to any query.
\end{corollary}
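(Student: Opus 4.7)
The plan is to combine two bounds on the $d_j$'s, one per server and one aggregate, both of which follow almost immediately from results already in hand.

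First, I would apply Theorem~\ref{thm:binarydownload} with $x=1$. The hypothesis $k\geq 3$ meets the requirement $k\geq x+2=3$, so the theorem yields that for every query $(q_1,\dots,q_n)$ and every server $S_j$, we have $d_j\in\{0,1\}$. In particular, the servers that contribute at all each contribute exactly one bit.

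Second, I would establish the matching aggregate lower bound: for any query submitted when the user is interested in Record $\ell$, the total download $\sum_{j=1}^n d_j$ is at least $R$. This is a pigeonhole argument using correctness. As the database $\mathbf X$ varies, $X_\ell$ takes all $2^R$ possible values, and for each such value the query in question produces some response tuple $(c_1,\dots,c_n)$. If two databases with different values of $X_\ell$ produced the same response tuple, the user would be unable to recover $X_\ell$, contradicting correctness. Hence at least $2^R$ distinct response tuples occur. Since for this fixed query the response lies in the set of binary strings of length $\sum_j d_j$ (the $d_j$ depending only on the $q_j$ by the binary-channel definition), we get $2^{\sum_j d_j}\geq 2^R$, i.e.\ $\sum_j d_j\geq R$.

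Finally, I would combine the two observations. The download complexity is exactly $R+1$, so every query satisfies $\sum_j d_j\leq R+1$, and together with the previous step we get $\sum_j d_j\in\{R,R+1\}$. Since each $d_j\in\{0,1\}$, this sum is precisely the number of servers that respond with a single bit (with all remaining servers staying silent), which is exactly what the corollary asserts. I do not anticipate any serious obstacle: the only mildly delicate point is justifying the $2^R$ distinct responses, and that follows directly from correctness together with the fact that responses are a deterministic function of the query and the database.
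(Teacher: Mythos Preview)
Your proposal is correct and follows essentially the same approach as the paper: apply Theorem~\ref{thm:binarydownload} with $x=1$ to bound each $d_j\le 1$, use the download-complexity hypothesis for the upper bound $\sum_j d_j\le R+1$, and use correctness (the user must recover an $R$-bit record) for the lower bound $\sum_j d_j\ge R$. The paper states the lower-bound step in one line, whereas you spell out the pigeonhole argument explicitly, but the content is the same.
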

\begin{proof}
The special case of Theorem~\ref{thm:binarydownload} when $x=1$ shows that no server replies with more than $1$ bit. For the download complexity to be $R+1$, no more than $R+1$ servers can respond non-trivially. Since the user deduces the value of an $R$-bit record from the bits it has downloaded, at least $R$ servers must reply to any query.
\end{proof}

One might hope that the Corollary~\ref{cor:onebit} could be strengthened to the statement that exactly $R+1$ servers must respond non-trivially. However, examples show that this is not always the case: see the comments after Construction~\ref{con:manyserver} below.

Shah et al.\  state~\cite[Theorem~1]{SRR14} that, in the situation above, ``for almost every PIR operation'' $R+1$ servers must respond, and they provide a heuristic argument to support this statement. The following result makes this rigorous, with a precise definition of `almost every'.

\begin{theorem}
\label{thm:almostall}
Let the database contain $k$ records with $k\geq 3$. Suppose we have a PIR scheme with a total download of exactly $R+1$ bits (in the worst case). Suppose a user chooses to retrieve a record chosen with a uniform probability distribution on $\{1,2,\ldots,k\}$. Let $\alpha$ be the probability that only $R$ bits are downloaded. Then
\[
\alpha\leq \frac{R+1}{kR+1}.
\]
\end{theorem}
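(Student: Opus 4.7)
The plan is to exploit two structural properties of \emph{Type A} queries---those that elicit only $R$ download bits, which by Corollary~\ref{cor:onebit} consist of one-bit replies from $R$ distinct servers---and then combine them with privacy via a double count. First, each Type A query retrieves \emph{exactly one} record: if a query $Q$ returning $R$ bits were in the support of the query distribution for two records $\ell \neq \ell'$, correctness would force the $R$-bit response to distinguish every pair $(X_\ell, X_{\ell'})$ as the database varies, but such pairs range over $2^{2R}$ values while responses take at most $2^R$. Second, for such a $Q$ retrieving $\ell$ with responding set $T$, the response bijects with $X_\ell$ (it determines $X_\ell$, takes at most $2^R$ values, and must realise all of them), which forces each individual bit $c_j$ with $j \in T$ to be a \emph{non-constant} function of $X_\ell$ \emph{alone}: non-constancy follows because otherwise the joint map could not be surjective onto $\{0,1\}^R$, and the ``$X_\ell$ alone'' part follows because bijectivity forces databases agreeing on $X_\ell$ to produce identical responses.

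I then deduce that for each server $S_j$ and each query $q_j \in A_j$ (the set of queries to which $S_j$ actually replies), there is at most one record $\ell^{\ast}(j,q_j)$ such that $q_j$ is used in a Type A query retrieving that record: otherwise the bit $c_j(q_j,\mathbf{X})$ would be simultaneously a non-constant function of $X_\ell$ alone and of $X_{\ell'}$ alone, a contradiction. This partitions $A_j$ into disjoint subsets $B_{j,\ell}$ (plus possibly a leftover of queries used only in Type B queries). Privacy---the marginal distribution of $q_j$ is independent of which record the user is seeking---then ensures that both $\beta_{j,\ell} := \Pr[q_j \in B_{j,\ell}]$ and $p_j := \Pr[q_j \in A_j]$ are well-defined constants, not depending on the user's choice of target record.

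The concluding step is a double count. Conditioning on the user wanting record $\ell$, any Type A query the user generates must retrieve $\ell$, so all $R$ of its responding servers satisfy $q_j \in B_{j,\ell}$; hence
\[
\sum_j \beta_{j,\ell} \;=\; \Exp\bigl[\,|\{j:q_j\in B_{j,\ell}\}|\,\bigm|\,\ell\bigr] \;\geq\; R\alpha_\ell,
\]
where $\alpha_\ell$ denotes the probability the query is Type A given record $\ell$. Summing over $\ell$, using $\sum_\ell \alpha_\ell = k\alpha$ and the disjointness bound $\sum_\ell \beta_{j,\ell}\leq p_j$, yields $kR\alpha \leq \sum_j p_j$. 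But $\sum_j p_j$ equals the expected number of responding servers on a random query, which is $R\alpha + (R+1)(1-\alpha) = R+1-\alpha$ by Corollary~\ref{cor:onebit}. Thus $kR\alpha \leq R+1-\alpha$, which rearranges to the desired inequality. I expect the main subtlety to lie in the second structural observation---that every individual response bit of a Type A query depends on the target record alone---since this is where the bijectivity forced by correctness does the essential work, and it is the ingredient that makes the partition of $A_j$ into the sets $B_{j,\ell}$ legitimate.
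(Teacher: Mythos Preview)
Your proposal is correct and follows essentially the same route as the paper: both establish that each responding bit in an $R$-bit (Type~A) query is a non-constant function of the target record alone, and then combine this with privacy via a per-server accounting to reach $kR\alpha \le R+1-\alpha$. Your partition/double-count framing is equivalent to the paper's ``server guessing strategy'' device, and your surjectivity argument for non-constancy lets you skip the paper's preliminary without-loss-of-generality step (deleting constant replies); these are presentational rather than substantive differences.
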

\begin{proof}
By Corollary~\ref{cor:onebit}, each server replies to any query with at most one bit. We may assume, without loss of generality, that if a server replies with one bit then this bit must depend on the database in some way (since otherwise we may modify the scheme so that this server does not reply and the probability $\alpha$ will increase).

Let $(q_1,q_2,\ldots,q_n)$ be a query for the $\ell$th record where only $R$ servers reply non-trivially. Since only $R$ servers reply, there are at most $2^R$ possible replies to the query (over all databases). But the value $X_\ell$ of the record is determined by the reply, and there are $2^R$ possible values of $X_\ell$. So in fact there must be exactly $2^R$ possible replies, and there is a bijection between possible replies and possible values $X_\ell$. We claim that the replies of each of these $R$ servers can only depend on $X_\ell$, not on the rest of the database. To see this, suppose a server $S_r$ replies non-trivially, and let $f:\{0,1\}^{kR}\rightarrow\{0,1\}$ be the function mapping each possible value of the database to the reply of $S_r$ to query $q_r$. Suppose $f$ is not a function of $X_\ell$ alone, so there are two databases $\mathbf{X}$ and $\mathbf{X}'$ whose $\ell$th records are equal and such that $f(\mathbf{X})\not=f(\mathbf{X}')$. Let $\rho$ be the common value of the $\ell$th record in both $\mathbf{X}$ and $\mathbf{X}'$. When $X_\ell=\rho$ there are at least two possible replies to the query, depending on the value of the remainder of the database. But this contradicts the fact that we have a bijection between possible replies and possible values $X_\ell$. So our claim follows.

Let $A$ be the event that exactly $R$ servers reply, and for $r=1,2,\ldots,n$ let $B_r$ be the event that server $S_r$ replies non-trivially. Let $D_r$ be the indicator random variable for the event $B_r$. So $D_r$ is equal to $1$ when $S_r$ responds non-trivially and $0$ otherwise. Note that $D_r$ is always equal to the number of bits downloaded from $S_r$, thus the expected value of the sum of these variables satisfies
\begin{equation}
\label{eqn:deqn}
\Exp\left(\sum_{r=1}^n D_r\right)=\alpha R + (1-\alpha)(R+1)=R+1-\alpha.
\end{equation}
Let $D'_r$ be the indicator random variable for the event $A\wedge B_r$. When $A$ does not occur, all the variables $D'_r$ are equal to $0$. When $A$ occurs, $D'_r$ is the number of bits downloaded from server $S_r$ and a total of $R$ bits are downloaded. So
\begin{equation}
\label{eqnddasheqn}
\Exp\left(\sum_{r=1}^n D'_r\right)=(1-\alpha)0+\alpha R=\alpha R.
\end{equation}

Suppose a server $S_r$ uses the following strategy to guess the value of $\ell$ from the query $q_r$ it receives. If the server replies non-trivially using a function $f$ that depends on only one record, say Record $\ell'$, it guesses that $\ell=\ell'$. Otherwise, the server guesses a value uniformly at random.  The server guesses correctly with probability $1/k$ when it responds trivially. The argument in the paragraph above shows the server always guesses correctly if it responds non-trivially and only $R$ servers reply. Thus the server is correct with probability at least $(1/k)\Pr(\overline{B_r})+\Pr(A\wedge B_r)$. The privacy requirement of the PIR scheme implies that the server's probability of success can be at most $1/k$, and so we must have that $\Pr(A\wedge B_r)\leq (1/k)\Pr(B_r)$. Hence
\[
\Exp(D'_r)\leq (1/k)\Exp(D_r).
\]
By linearity of expectation, we see that
\[
\Exp\left(\sum_{r=1}^n D'_r\right)=\sum_{r=1}^n\Exp(D'_r)\leq \frac{1}{k}\sum_{r=1}^n\Exp(D_r)=\frac{1}{k}\,\Exp\left(\sum_{r=1}^n D_r\right).
\]
So, using~\eqref{eqn:deqn} and~\eqref{eqnddasheqn}, we see that
\[
\alpha R\leq \frac{1}{k}(R+1-\alpha).
\]
Rearranging this inequality in terms of $\alpha$, we see that the theorem follows.
\end{proof}

\section{Constructions}
\label{sec:constructions}

Recall the notation from the introduction: we are assuming that our database $\mathbf{X}$ consists of $k$ records, each of $R$ bits, and we write $X_{ij}$ for the $j$th bit of the $i$th record.

\subsection{Two schemes with download complexity $R+1$}
\label{subsec:rplusone}

This section describes two schemes with download complexity $R+1$. Recall that this download complexity is optimal, by Corollary~\ref{cor:rplusone}. The first scheme is included because of its simplicity; it can be thought of as a variation of the scheme of Chor et al.\  described in Example~\ref{ex:chor}, and achieves optimal download complexity using only $R+1$ servers. It has a total storage requirement which is quadratic in $R$. But the scheme has high upload complexity: $kR(R+1)$. The second scheme is very closely related to a scheme mentioned in an aside in Shah et al.~\cite[Section IV]{SRR14}. This scheme has the same properties as the first scheme, except the upload complexity is improved to just $(R+1)k\lceil\log (R+1)\rceil$.

We note that the main scheme described in Shah et al.~\cite[Section IV]{SRR14} also has optimal download complexity of $R+1$. Each server stores just $R$ bits, and so the storage per server is low. However, their scheme uses an exponential (in $R$) number of servers, and so has exponential total storage.

\begin{construction}\label{con:manyserver}
Suppose there are $R+1$ servers, each storing the whole database.
\begin{itemize}
\item A user who requires Record $\ell$ creates a $k\times R$ array of bits by drawing its entries  $\alpha_{ij}$ uniformly and independently at random.
\item Server $S_{R+1}$ is requested to return the bit $c_{R+1}=\bigoplus_{i=1}^k\bigoplus_{j=1}^R\alpha_{ij}X_{ij}$.
\item For $r=1,2,\dotsc, R$, server $S_r$ is requested to return the bit $c_r=\bigoplus_{i=1}^k\bigoplus_{j=1}^R\beta_{ij}X_{ij}$, where
\[
\beta_{ij}=\begin{cases}
\alpha_{ij}\oplus 1&\text{if }i=\ell\text{ and }j=r,\\
\alpha_{i,j}&\text{otherwise}.
\end{cases}
\]
\item To recover $X_{\ell r}$, namely bit $r$ of record $X_\ell$, the user computes $c_r\oplus c_{R+1}$.
\end{itemize}
\end{construction}

\begin{theorem}
\label{thm:construction1}
Construction~\ref{con:manyserver} is a $(R+1)$-server PIR scheme with download complexity $R+1$. The scheme has upload complexity $kR(R+1)$ and total storage $(R+1)Rk$ bits.
\end{theorem}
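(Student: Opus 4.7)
The plan is to verify each of the four claims in turn: correctness, privacy, download complexity, and the upload/storage counts.

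\textbf{Correctness.} I would first compute $c_r \oplus c_{R+1}$ for $1 \le r \le R$. Since $c_r = \bigoplus_{i,j}\beta_{ij}X_{ij}$ and $c_{R+1} = \bigoplus_{i,j}\alpha_{ij}X_{ij}$, linearity of XOR gives $c_r \oplus c_{R+1} = \bigoplus_{i,j}(\alpha_{ij}\oplus\beta_{ij})X_{ij}$. By the defining recipe for $\beta_{ij}$, the coefficient $\alpha_{ij}\oplus\beta_{ij}$ is $1$ exactly when $(i,j)=(\ell,r)$ and is $0$ otherwise, so the sum collapses to $X_{\ell r}$, as desired. Doing this for each $r=1,\dotsc,R$ yields every bit of $X_\ell$.

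\textbf{Privacy.} The query sent to $S_{R+1}$ is the matrix $(\alpha_{ij})$, which by construction is uniformly distributed on $\{0,1\}^{kR}$ independently of $\ell$, so $S_{R+1}$ learns nothing. For server $S_r$ with $r\le R$, the query is the matrix $(\beta_{ij})$, which is obtained from $(\alpha_{ij})$ by flipping a single entry (the one in position $(\ell,r)$). Flipping a fixed coordinate is a bijection of $\{0,1\}^{kR}$, so $(\beta_{ij})$ is also uniformly distributed. Hence the marginal distribution of $S_r$'s query is independent of $\ell$, and privacy holds.

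\textbf{Complexities.} Each of the $R+1$ servers returns a single bit, giving download complexity exactly $R+1$. Each server receives a $k\times R$ binary matrix, contributing $kR$ upload bits per server and hence $kR(R+1)$ in total. Finally, each of the $R+1$ servers stores the entire database of $kR$ bits, giving total storage $(R+1)Rk$.

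I do not anticipate a genuine obstacle here; the construction is designed exactly so that the cross-terms cancel. The only point worth emphasising in the write-up is the one-line argument that flipping a single coordinate preserves the uniform distribution, since that is what underpins privacy for servers $S_1,\dotsc,S_R$.
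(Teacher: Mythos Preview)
Your proposal is correct and follows essentially the same approach as the paper's own proof: both establish correctness by computing $c_r\oplus c_{R+1}$ via the coefficient identity $\alpha_{ij}\oplus\beta_{ij}$, argue privacy from uniformity of each server's query, and tally the complexities directly. Your explicit remark that flipping a fixed coordinate is a bijection of $\{0,1\}^{kR}$ is a nice touch that the paper leaves implicit, but otherwise the arguments coincide.
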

\begin{proof}
We note that
\[
\alpha_{ij}\oplus\beta_{ij}=\begin{cases}
1&\text{if }i=\ell\text{ and }j=r,\\
0&\text{otherwise}.
\end{cases}
\]
Hence
\begin{align*}
c_r\oplus c_{R+1}&=\bigoplus_{i=1}^k\bigoplus_{j=1}^R(\alpha_{ij}\oplus \beta_{ij})X_{ij}\\
&=X_{\ell r}.
\end{align*}
So the user recovers the bit $X_{\ell r}$ correctly for any $r$ with $1\leq r\leq R$. This proves correctness.

For privacy, we note that $S_{R+1}$ receives a uniformly distributed vector $q_{R+1}=(\alpha_{ij})\in\{0,1\}^{kR}$ in all circumstances. Since the distribution of $q_{R+1}$ does not depend on~$\ell$, no information about~$\ell$ is received by $S_{R+1}$. Similarly, for any $1\leq r\leq R$, the query $q_r=(\beta_{ij})\in\{0,1\}^{kR}$ is uniformly distributed irrespective of the value of $\ell$, and so no information about $\ell$ is received by $S_r$.

We note that each query $q_r$ is $kR$ bits long (for any $r\in\{1,2,\ldots,R+1\}$) and so the upload complexity of the scheme is $kR(R+1)$. Each server replies with a single bit, and so the download complexity is $R+1$. The database is $kR$ bits long, and so (since each server stores the whole database) the total storage is $(R+1)Rk$ bits.
\end{proof}

We note that there are situations where one of the servers is asked for an all-zero linear combination of bits from the database. In this case, that server need not reply. So the number of bits of downloaded in Construction~\ref{con:manyserver} is sometimes $R$ (though usually $R+1$ bits are downloaded). See the comment following Corollary~\ref{cor:onebit}.

We now describe a second construction with improved upload complexity. The construction can be thought of as a variant of Construction~\ref{con:manyserver} where the rows of the array $\alpha$ are all taken from a restricted set $\{e_0,e_1,\ldots,e_{R}\}$ of size $R+1$. A similar idea is used in the constructions in~\cite{SRR14}.

For $i=1,2,\ldots,R$, let $e_i$ be the $i$\textsuperscript{th} unit vector of length $R$. Let $e_0$ be the all zero vector. For binary vectors $\mathbf{x}$ and $\mathbf{y}$ of length $R$, write $\mathbf{x}\cdot \mathbf{y}$ be their inner product; so $\mathbf{x}\cdot \mathbf{y}=\oplus_{j=1}^Rx_jy_j$.

\begin{construction}\label{con:lowerupload}
Suppose there are $R+1$ servers, each storing the whole database.
\begin{itemize}
\item A user who requires Record $\ell$ chooses $k$ elements $a_1,a_2,\ldots,a_k\in\mathbb{Z}_{R+1}$ uniformly and independently at random. For $r=1,\ldots,R+1$, server $S_r$ is sent the vector $q_r=(b_{1r},b_{2r},\ldots,b_{kr})\in\mathbb{Z}_{R+1}^k$, where
\[
b_{ir}=\begin{cases}
a_i+r\bmod R+1&\text{if }i=\ell,\\
a_i&\text{otherwise}.
\end{cases}
\]
\item Server $S_r$ returns the bit $c_{r}=\bigoplus_{i=1}^k e_{b_{ir}}\cdot X_i$.
\item To recover the $j$\textsuperscript{th} bit of $X_\ell$, the user finds the integers $r$ and $r'$ such that $b_{\ell r}=0$ and $b_{\ell r'}=j$. The user then computes $c_r\oplus c_{r'}$.
\end{itemize}
\end{construction}
\begin{theorem}
\label{thm:construction2}
Construction~\ref{con:lowerupload} is an $(R+1)$-server PIR scheme with download complexity $R+1$. The scheme has upload complexity $k(R+1)\log (R+1)$ and total storage $(R+1)Rk$ bits.
\end{theorem}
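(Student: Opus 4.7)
The plan is to verify correctness, privacy, and then simply count bits for upload, download, and storage, in that order.

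For correctness, I would first observe that, because the $a_i$ live in $\mathbb{Z}_{R+1}$ and the map $r\mapsto a_\ell+r \bmod (R+1)$ is a bijection from $\{1,\dots,R+1\}$ to $\mathbb{Z}_{R+1}$, the values $b_{\ell 1},b_{\ell 2},\dots,b_{\ell,R+1}$ are exactly the $R+1$ elements of $\mathbb{Z}_{R+1}$, each occurring once. So the indices $r$ (with $b_{\ell r}=0$) and $r'$ (with $b_{\ell r'}=j$) used in the recovery step exist and are unique for every $j\in\{1,\dots,R\}$. Since $b_{ir}=a_i=b_{ir'}$ for all $i\neq\ell$, I would then compute $c_r\oplus c_{r'}=e_0\cdot X_\ell\oplus e_j\cdot X_\ell=X_{\ell j}$, giving recovery of the $j$\textsuperscript{th} bit of $X_\ell$.

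For privacy, the key observation is that for each fixed $r$ and each $i$, $b_{ir}$ is uniform on $\mathbb{Z}_{R+1}$: for $i\neq\ell$ this is immediate since $b_{ir}=a_i$; for $i=\ell$, $b_{\ell r}=a_\ell+r$ is a fixed shift of a uniform variable. Independence of the $a_i$ then gives that $q_r=(b_{1r},\dots,b_{kr})$ is uniform on $\mathbb{Z}_{R+1}^k$, with a distribution independent of $\ell$. Hence $S_r$ learns nothing about which record is being requested.

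Finally, for the complexity counts, each $c_r$ is a single bit so total download is $R+1$. Each query $q_r$ is an element of $\mathbb{Z}_{R+1}^k$, taking $k\log(R+1)$ bits to transmit, and there are $R+1$ servers, giving upload $k(R+1)\log(R+1)$. Since each server stores the whole $kR$-bit database and there are $R+1$ servers, total storage is $(R+1)Rk$.

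I do not anticipate any real obstacle: the construction is designed so that the bijection $r\mapsto a_\ell+r\bmod (R+1)$ makes every row of the ``effective'' $\alpha$-array in Construction~\ref{con:manyserver} equal to a unit vector (or zero), which is precisely what makes both correctness and privacy hold. The only point worth being slightly careful about is that the indices $r,r'$ in the recovery step must be distinct (they are, since $j\neq 0$) and must both lie in $\{1,\dots,R+1\}$ (they do, by the aforementioned bijection).
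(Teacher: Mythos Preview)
Your proposal is correct and follows essentially the same approach as the paper: establish correctness via the bijection $r\mapsto a_\ell+r\bmod(R+1)$ and cancellation of the $i\neq\ell$ terms, establish privacy by noting $q_r$ is uniform on $\mathbb{Z}_{R+1}^k$ independently of $\ell$, and count bits. Your privacy argument is slightly more explicit (treating $i=\ell$ and $i\neq\ell$ separately) and you add the small sanity check that $r\neq r'$, but otherwise the proofs are essentially identical.
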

\begin{proof}
For correctness, we first note that $r$ and $r'$ exist since $b_{\ell r}\in\{0,1,2\ldots,R\}$ takes on each possible value once as $r\in\{0,1,\ldots,R\}$ varies. Also note that
\[
e_{b_{ir}}\oplus e_{b_{ir'}}=\begin{cases}
e_j&\text{if }i=\ell,\\
e_0&\text{otherwise}.
\end{cases}
\]
So, since $e_0=0$,
\[
c_r\oplus c_{r'}=\bigoplus_{i=1}^k(e_{b_{ir}}\oplus e_{b_{ir'}})\cdot X_{i}=e_j\cdot X_\ell
=X_{\ell j}.
\]
So the user recovers the bit $X_{\ell j}$ correctly for any $j$ with $1\leq j\leq R$.

For privacy, we note that $S_r$ receives a uniformly distributed vector $q_{r}\in(\mathbb{Z}_{R+1})^{k}$ in all circumstances. Since the distribution of $q_{r}$ does not depend on~$\ell$, no information about~$\ell$ is received by $S_r$.

The calculations of the total storage and download complexity are identical to those in the proof of Theorem~\ref{thm:construction1}. For the upload complexity, note that it takes just $\log (R+1)$ bits to specify an element of $\mathbb{Z}_{R+1}$. Since each server receives $k$ elements from $\mathbb{Z}_{R+1}$, and since there are $R+1$ servers, the upload complexity of the scheme is $k(R+1)\log (R+1)$ as claimed.
\end{proof}

\subsection{Optimal download complexity for a small number of servers}
\label{subsec:nserver}

For an integer $n$ such that $(n-1)\mid R$, we now describe an $n$-server PIR scheme with download complexity $\frac{n}{n-1}R$ bits. By Theorem~\ref{thm:tdownload}, this construction provides schemes with an optimal download complexity for $n$ servers, provided the number $k$ of records is sufficiently large. This construction is closely related to Construction~\ref{con:lowerupload} above. Indeed, the construction below is a generalisation of Construction~\ref{con:lowerupload} where we work with strings rather than single bits.

We first define an analogue of the bits $e_b\cdot X_i$ computed by servers in Construction~\ref{con:lowerupload}. We divide an $R$-bit string $X$ into $n-1$ blocks, each of size $R/(n-1)$. For $b\in\{1,2,\ldots,{n-1}\}$ we write $\pi_b(X)$ for the $b$th block (so $\pi_b(X)$ is an $R/(n-1)$-bit string). We write $\pi_0(X)$ for the all-zero string $0^{R/(n-1)}$ of length $R/(n-1)$.

\begin{construction}
\label{con:smallnumservers}
Let $n$ be an integer such that $(n-1)\mid R$. Suppose there are $n$ servers, each storing the entire database.\begin{itemize}
\item A user who requires Record $\ell$ chooses $k$ elements $a_1,a_2,\ldots,a_k\in\mathbb{Z}_{n}$ uniformly and independently at random. For $r=1,\ldots,n$, server $S_r$ is sent the vector $q_r=(b_{1r},b_{2r},\ldots,b_{kr})\in\mathbb{Z}_{n}^k$, where
\[
b_{ir}=\begin{cases}
a_i+r\bmod n&\text{if }i=\ell,\\
a_i&\text{otherwise}.
\end{cases}
\]
\item Server $S_r$ returns the $R/(n-1)$-bit string $c_{r}=\bigoplus_{i=1}^k \pi_{b_{ir}}(X_i)$.
\item To recover the $j$\textsuperscript{th} block of $X_\ell$, the user finds the integers $r$ and $r'$ such that $b_{\ell r}=0$ and $b_{\ell r'}=j$. The user then computes $c_r\oplus c_{r'}$.
\end{itemize}
\end{construction}
\begin{theorem}
\label{thm:construction3}
Construction~\ref{con:smallnumservers} is an $n$-server PIR scheme with download complexity $\frac{n}{n-1}R$. The scheme has upload complexity $nk\log n$ and total storage is $nkR$.
\end{theorem}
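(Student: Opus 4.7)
The plan is to mirror the structure of the proof of Theorem~\ref{thm:construction2}, since Construction~\ref{con:smallnumservers} is the natural block-level analogue of Construction~\ref{con:lowerupload}: we replace the single bit $e_b\cdot X_i$ by the $R/(n-1)$-bit block $\pi_b(X_i)$, and replace $\mathbb{Z}_{R+1}$ by $\mathbb{Z}_n$. So the proof has three parts: correctness, privacy, and the parameter counts.

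For correctness, I would first observe that as $r$ ranges over $\{1,2,\ldots,n\}$, the quantity $b_{\ell r}=a_\ell+r \bmod n$ takes each value in $\mathbb{Z}_n$ exactly once, so the indices $r$ with $b_{\ell r}=0$ and $r'$ with $b_{\ell r'}=j$ exist (and are unique). Next, for any $i\neq \ell$ we have $b_{ir}=b_{ir'}=a_i$, so $\pi_{b_{ir}}(X_i)\oplus \pi_{b_{ir'}}(X_i)$ is the all-zero string of length $R/(n-1)$. For $i=\ell$, $\pi_{b_{\ell r}}(X_\ell)\oplus \pi_{b_{\ell r'}}(X_\ell)=\pi_0(X_\ell)\oplus \pi_j(X_\ell)=\pi_j(X_\ell)$ since $\pi_0(X_\ell)=0^{R/(n-1)}$. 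Summing (XOR-ing) these contributions over $i$ gives $c_r\oplus c_{r'}=\pi_j(X_\ell)$, which is precisely the $j$\sth block of $X_\ell$. Doing this for each $j\in\{1,\ldots,n-1\}$ reconstructs $X_\ell$.

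For privacy, I would argue exactly as in Theorem~\ref{thm:construction2}: fix a server $S_r$ and consider the marginal distribution of $q_r=(b_{1r},\ldots,b_{kr})$. For $i\neq \ell$, $b_{ir}=a_i$ is uniform on $\mathbb{Z}_n$, and for $i=\ell$, $b_{\ell r}=a_\ell+r\bmod n$ is also uniform on $\mathbb{Z}_n$; since the $a_i$ are independent, $q_r$ is uniform on $\mathbb{Z}_n^k$, independently of $\ell$. So $S_r$ gains no information about $\ell$.

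For the parameter counts, each server stores the entire database of $kR$ bits, so the total storage is $nkR$. Each server transmits an $R/(n-1)$-bit string, so the download complexity is $n\cdot R/(n-1)=\tfrac{n}{n-1}R$. An element of $\mathbb{Z}_n$ can be transmitted using $\log n$ bits, so each query $q_r$ has length $k\log n$ bits and the total upload complexity is $nk\log n$. I do not expect any real obstacle here; the only thing to be slightly careful about is confirming that the map $r\mapsto b_{\ell r}$ is a bijection from $\{1,\ldots,n\}$ to $\mathbb{Z}_n$, which is where we use the fact that there are $n$ (rather than $n-1$) servers even though the database is split into $n-1$ blocks.
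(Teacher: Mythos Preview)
Your proposal is correct and follows essentially the same approach as the paper's own proof: both verify correctness by showing $b_{\ell r}$ runs over all of $\mathbb{Z}_n$ so that $r,r'$ exist, then split the XOR $c_r\oplus c_{r'}$ into the $i\neq\ell$ terms (which cancel) and the $i=\ell$ term (which yields $\pi_j(X_\ell)$); both establish privacy by observing that each $q_r$ is uniform on $\mathbb{Z}_n^k$ regardless of $\ell$; and both compute the storage, upload, and download parameters in the same straightforward way.
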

\begin{proof}
Exactly as in the proof of Theorem~\ref{thm:construction2}, we first note that $r$ and $r'$ exist since $b_{\ell r}\in\{0,1,2\ldots,n-1\}$ takes on each possible value once as $r\in\{0,1,\ldots,n\}$ varies. Also note that when $i\not=\ell$
\[
\pi_{b_{ir}}(X_i)\oplus \pi_{b_{ir'}}(X_i)=\pi_{a_{i}}(X_i)\oplus \pi_{a_{i}}(X_i)=0^{R/(n-1)},
\]
but when $i=\ell$
\[
\pi_{b_{ir}}(X_i)\oplus \pi_{b_{ir'}}(X_i)=\pi_{0}(X_i)\oplus \pi_j(X_i)=\pi_j(X_i)=
\pi_j(X_\ell).
\]
Hence
\[
c_r\oplus c_{r'}=\bigoplus_{i=1}^k(\pi_{b_{ir}}(X_i)\oplus \pi_{b_{ir'}}(X_{i}))=\pi_j(X_{\ell}).
\]
So the user recovers the $j$th block of $X_{\ell}$ correctly for any $j$ with $1\leq j\leq (n-1)$.

For privacy, we note that $S_r$ receives a uniformly distributed vector $q_{r}\in(\mathbb{Z}_n)^{k}$ in all circumstances. Since the distribution of $q_{r}$ does not depend on~$\ell$, no information about~$\ell$ is received by $S_r$.

The total storage is $nkR$, since each of $n$ servers stores the entire $kR$-bit database. Each query $q_r$ is $k\log n$ bits long, since an element of $\mathbb{Z}_n$ may be specified using $\log n$ bits. Hence the upload complexity is $nk\log n$. Since each server returns an $R/(n-1)$- bit string, the download complexity is $\frac{n}{n-1}R$.
\end{proof}

Shah et al.~\cite[Section~V]{SRR14} provide PIR schemes with linear (in $R$) total storage and with download complexity between $2R$ and $4R$. Their scheme requires a number of servers which is independent of $R$ (but is linear in $k$). The construction above shows that for any fixed positive $\epsilon$ a PIR scheme with linear total storage exists with download complexity of $(1+\epsilon)R$ (as we just fix a value of $n$ such that $n/(n-1)<1+\epsilon$). This is within an arbitrarily close factor of optimality. Moreover, the number of servers in our construction is independent of both $k$ and $R$. However, note that in our scheme each server stores the whole database, whereas the per server storage of the scheme of Shah et al.\  is a fixed multiple of $R$. This issue is addressed in Construction~\ref{con:smallperserver} below.

\subsection{Schemes with small per-server storage}
\label{subsec:perserver}

We make the observation that the last construction may be used to give families of schemes with lower per-server storage; see~\cite[Section~V]{SRR14} for similar techniques. The point here is that we never XOR the first bit (say) from one block with the second bit (say) of any other block, so we can store these bits in separate servers without causing problems.

More precisely, let $s$ be a fixed integer such that $s\mid R$ and let $t$ be a fixed integer such that $(t-1)\mid s$. We divide each record $X_i$ into $R/s$ blocks $\pi_1(X_i), \pi_2(X_i),\ldots,\pi_{R/s}(X_i)$, each $s$ bits long. We then divide each block $\pi_j(X_i)$ into $(t-1)$ sub-blocks $\pi_{j,1}(X_i)$, $\pi_{j,2}(X_i),\ldots ,\pi_{j,t-1}(X_i)$, each $s/(t-1)$ bits long. For any $i\in\{1,2,\ldots,k\}$ and any $j\in\{1,2,\ldots,R/s\}$, we define $\pi_{j,0}(X_i)$ to be the all zero string $0^{s/(t-1)}$ of length $s/(t-1)$.

\begin{construction}
\label{con:smallperserver}
Let $s$ be a fixed integer such that $s\mid R$. Let $t$ be a fixed integer such that $(t-1)\mid s$. Let $n=t(R/s)$.  Suppose there are $n$ servers. Each server will store just $ks$ bits.
\begin{itemize}
\item Index the $t(R/s)$ servers by pairs $(u,r)$, where $1\leq r\leq t$ and where $1\leq u\leq R/s$. Server $S_{(u,r)}$ stores the $u$th sub-block of every block. So $S_{(u,r)}$ stores $\pi_{u,j}(X_i)$ where $1\leq i\leq k$ and $1\leq j\leq t-1$. Note that each server stores $k(t-1)s/(t-1)=ks$ bits.
\item A user who requires Record $\ell$ chooses $k$ elements $a_1,a_2,\ldots ,a_k\in\mathbb{Z}_t$ uniformly and independently at random. The server $S_{(k,r)}$ is sent the query $q_r=(b_{1r},b_{2r},\ldots,b_{kr})\in\mathbb{Z}_{t}^k$, where
\[
b_{ir}=\begin{cases}
a_i+r\bmod n&\text{if }i=\ell,\\
a_i&\text{otherwise}.
\end{cases}
\]
(Note that many servers receive the same query.)
\item Server $S_{(u,r)}$ returns the $s/(t-1)$-bit string $c_{(u,r)}=\oplus_{i=1}^k \pi_{u,b_{ir}}(X_i)$.
\item To recover the $j$th sub-block of the $u$th block of $X_\ell$, the user finds integers $r$ and $r'$ such that $b_{\ell r}=0$ and $b_{\ell r'}=j$ and computes $c_{(u,r)}\oplus c_{(u,r')}$.
\end{itemize}
\end{construction}
\begin{theorem}
\label{thm:construction4}
Construction~\ref{con:smallperserver} is a PIR scheme with download complexity $\frac{R}{s}\frac{r}{t-1}s=\frac{t}{t-1}R$. The scheme has upload complexity $nk\log t=(tkR/s)\log t$ and total storage $nks=tkR$ bits.
\end{theorem}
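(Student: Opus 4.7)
The plan is to follow the template used in the proofs of Theorems~\ref{thm:construction2} and~\ref{thm:construction3}, lifted to the nested block/sub-block decomposition in Construction~\ref{con:smallperserver}. The crucial observation (which is what makes the per-server storage reduction work) is that for any fixed sub-block index $u$, the response $c_{(u,r)}$ involves only sub-blocks with index $u$. So when we XOR two such responses, the cancellation argument can be carried out independently within each index $u$, exactly as before.

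For correctness, I would first note that as $r$ ranges over $\{1,\ldots,t\}$ the value $b_{\ell r} = a_\ell + r \bmod t$ traverses all of $\mathbb{Z}_t$ exactly once, so the integers $r$ and $r'$ with $b_{\ell r}=0$ and $b_{\ell r'}=j$ exist and are unique. For $i\neq\ell$ we have $b_{ir}=b_{ir'}=a_i$ so $\pi_{u,b_{ir}}(X_i)\oplus\pi_{u,b_{ir'}}(X_i)=0^{s/(t-1)}$, while for $i=\ell$, $\pi_{u,0}(X_\ell)\oplus\pi_{u,j}(X_\ell)=\pi_{u,j}(X_\ell)$ because $\pi_{u,0}$ is the all-zero string by definition. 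Summing over $i$ in the definition of $c_{(u,r)}$ gives $c_{(u,r)}\oplus c_{(u,r')}=\pi_{u,j}(X_\ell)$, so the user recovers the desired sub-block.

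For privacy, I would argue verbatim as in the proof of Theorem~\ref{thm:construction3}: the query $q_r=(b_{1r},\ldots,b_{kr})$ received by $S_{(u,r)}$ is obtained from the uniformly random vector $(a_1,\ldots,a_k)\in\mathbb{Z}_t^k$ by adding the constant $r$ modulo $t$ to the $\ell$th coordinate. Since translation by a constant is a bijection on $\mathbb{Z}_t$, the resulting distribution of $q_r$ is uniform on $\mathbb{Z}_t^k$ independent of $\ell$, so no individual server learns anything about the requested record.

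Finally, I would verify the parameter counts in one short paragraph: each server $S_{(u,r)}$ stores the $u$th sub-block of each of the $t-1$ nontrivial blocks of each of the $k$ records, namely $k(t-1)\cdot s/(t-1)=ks$ bits, giving total storage $nks=tkR$; each server returns an $s/(t-1)$-bit string, yielding download $n\cdot s/(t-1)=\frac{t}{t-1}R$; and each of the $n$ servers receives a $k\log t$-bit query, giving upload $nk\log t=(tkR/s)\log t$. There is essentially no genuine obstacle: the argument is a mechanical lift of Theorem~\ref{thm:construction3}, and the only thing one has to check is that partitioning each block into sub-blocks and distributing sub-blocks across servers is consistent with the XOR structure, which is immediate because responses with different $u$ are never combined.
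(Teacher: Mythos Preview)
Your proposal is correct and follows essentially the same route as the paper: correctness via the identical cancellation argument lifted from Theorem~\ref{thm:construction3} to each fixed index $u$, privacy from the uniformity of $q_r\in\mathbb{Z}_t^k$, and a direct parameter count. The only quibble is terminological: in Construction~\ref{con:smallperserver} the index $u\in\{1,\ldots,R/s\}$ is a \emph{block} index and $b_{ir}\in\mathbb{Z}_t$ selects a \emph{sub-block} within that block, so your phrases ``sub-block index $u$'' and ``the $t-1$ nontrivial blocks'' have the roles swapped (the paper's own prose is similarly muddled here), but this does not affect the mathematics or the computed storage $k(t-1)\cdot s/(t-1)=ks$.
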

\begin{proof}
As in the proofs of Theorems~\ref{thm:construction2} and~\ref{thm:construction3}, privacy follows since $S_{u,r}$ always receives a uniformly distributed vector $q_r\in\mathbb{Z}_t^k$ as a query. For correctness, observe that when $i\not=\ell$
\[
\pi_{u,b_{ir}}(X_i)\oplus \pi_{u,b_{ir'}}(X_i)=\pi_{u,a_{i}}(X_i)\oplus \pi_{u,a_{i}}(X_i)=0^{s/(t-1)},
\]
but when $i=\ell$
\[
\pi_{u,b_{ir}}(X_i)\oplus \pi_{u,b_{ir'}}(X_i)=\pi_{u,0}(X_i)\oplus \pi_{u,j}(X_i)=\pi_{u,j}(X_i)=
\pi_{u,j}(X_\ell).
\]
Hence
\[
c_{(u,r)}\oplus c_{(u,r')}=\bigoplus_{i=1}^k(\pi_{u,b_{ir}}(X_i)\oplus \pi_{u,b_{ir'}}(X_{i}))=\pi_{u,j}(X_{\ell}).
\]
So the user can indeed compute the $j$-th sub-block of the $u$-th block as claimed.

It is easy to calculate the upload complexity, download complexity and total storage complexity as before, remembering that each server stores $ks$ bits rather than the entire database.\end{proof}

By fixing $t$ and $s$ to be sufficiently large integers, we can see that for all positive $\epsilon$ we have a family of schemes with download complexity at most $(1+\epsilon)R$, with total storage linear in the database size, with a linear (in $R$) number of servers, and where the per server storage is independent of $R$. So this family of schemes has a better download complexity and per-server storage than Shah et al.~\cite[Section~V]{SRR14}, and is comparable in terms of both the number of servers and total storage.

The servers may be divided into $t$ classes $\mathcal{S}_1,\mathcal{S}_2,\ldots,\mathcal{S}_{t}$, where
\[
\mathcal{S}_r=\{S_{(1,r)},S_{(2,r)},\ldots,S_{(R/s,r)}\}.
\]
Since servers in the same class receive the same query, the above construction still works if some of the servers within a class are merged. If this is done, the storage requirements of each merged server is increased, the download complexity and total storage are unaffected, and the number of servers required and upload complexity are reduced. So various trade-offs are possible using this technique.

\subsection{An explicit asymptotically optimal scheme}
\label{subsec:sunjafar}

Sun and Jafar~\cite{SuJa16d} describe a PIR scheme that has the best possible asymptotic download complexity, as $R\rightarrow\infty$. Their scheme is constructed in a recursive fashion. In this subsection, we describe an explicit, non-recursive, scheme with the same parameters as the Sun and Jafar scheme. Our scheme has the advantages of a more compact description, and (we believe) a proof that is significantly more transparent.

Our scheme is described in detail in Construction~\ref{con:sunjafar} below. But, to aid understanding, we first provide an overview of the scheme.

Suppose that $n^k$ divides $R$. We split an $R$-bit string $X$ into $n^k$ blocks, each of length~$R/n^k$. For $j\in\{1,2,\ldots,n^k\}$ we write $\pi_j(X)$ for the $j$-th block of $X$, and we write $\pi_0(X)$ for the all zero block $0^{R/n^k}$.

Let $\mathcal{V}$ be the set of all non-zero strings $\bv=v_1v_2v_3\ldots v_k\in\{0,1,2,\ldots,n-1\}^k$ such that $\sum_{i=1}^k v_i\equiv 0\bmod n-1$. (Note that our sum is taken modulo $n-1$, not modulo $n$.) Let $\mathcal{W}=\{1,2,\ldots,n\}\times \mathcal{V}$. For each record, say Record~$\ell$, we will define a graph $\Gamma^{[\ell]}$ on the vertex set $\mathcal{W}$ (see below).

There are $n$ servers in the scheme, each storing the whole database. Server $S_r$ receives a query consisting of integers $b_i(r,\bv)\in\{1,2,\ldots,n^k\}$ where $i\in\{1,2,\ldots,k\}$ and $\bv\in \mathcal{V}$. The server replies with $|\mathcal{V}|$ strings, each of length $R/n^k$. Each string is a linear combination of blocks, at most one block from each record (the choice of each block being determined by an integer $b_i(r,\bv)$: see~\eqref{eqn:sj_combination} below). From the perspective of $S_r$, the distribution of the integers $b_i(r,\bv)$ does not depend on $\ell$, enabling us to attain privacy. However, the user chooses these integers so that $b_i(r,\bv)$ and $b_{i}(r',\bv')$ are constrained to be equal when $(r,\bv)$ and $(r',\bv')$ lie in the same component of the graph $\Gamma^{[\ell]}$.
This is done in such a way that the user can reconstruct Record $\ell$ from the servers' replies.

\begin{figure}
\begin{tikzpicture}[scale=0.7]
\draw (0,0) node[left]{$(1,101)$}-- (10,0.5); \draw (0,1) node[left]{$(3,202)$} -- (10,0.5) node[right]{$(2,002)$};
\draw (0,2) node[left]{$(1,110)$} -- (10,2.5); \draw (0,3) node[left]{$(3,220)$} -- (10,2.5) node[right]{$(2,020)$};
\draw (0,4) node[left]{$(1,112)$}  -- (10,4.5); \draw (0,5) node[left]{$(3,222)$}  -- (10,4.5) node[right]{$(2,022)$};
\draw (0,6) node[left]{$(1,211)$} -- (10,6.5); \draw (0,7) node[left]{$(2,121)$} -- (10,6.5) node[right]{$(3,011)$};
\draw (0,8)  node[left]{$(2,211)$} -- (10,8.5); \draw (0,9)  node[left]{$(3,121)$} -- (10,8.5)  node[right]{$(1,011)$};

\draw (-1,5) ellipse (3 and 8);
\draw (11,5) ellipse (3 and 6.5);

\filldraw (0,0) circle (0.08);
\filldraw (0,2) circle (0.08);
\filldraw (0,4) circle (0.08);
\filldraw (0,6) circle (0.08);
\filldraw (0,8) circle (0.08);

\filldraw (0,1) circle (0.08);
\filldraw (0,3) circle (0.08);
\filldraw (0,5) circle (0.08);
\filldraw (0,7) circle (0.08);
\filldraw (0,9) circle (0.08);

\filldraw (10,0.5) circle (0.08);
\filldraw (10,2.5) circle (0.08);
\filldraw (10,4.5) circle (0.08);
\filldraw (10,6.5) circle (0.08);
\filldraw (10,8.5) circle (0.08);

\filldraw (0,10) node[left]{$(1,200)$} circle (0.08);

\draw (-1,-0.5) node{$\vdots$};
\draw (11,0) node{$\vdots$};

\draw (-1,14) node{$\mathcal{W}_1^{[\ell]}$};
\draw (11,12.5) node{$\mathcal{W}_2^{[\ell]}$};

\end{tikzpicture}
\caption{Part of the graph $\Gamma^{[\ell]}$ when $n=k=3$ and $\ell=1$.}
\label{fig:sjgraph}
\end{figure}

We now give details of the scheme. We begin by describing the graph $\Gamma^{[\ell]}$ (see Figure~\ref{fig:sjgraph}) and by detailing some of its structure.
Let $\ell\in\{1,2,\ldots,k\}$. The graph $\Gamma^{[\ell]}$ is defined on the vertex set $\mathcal{W}$, and is bipartite with parts $\mathcal{W}_1^{[\ell]}$ and $\mathcal{W}_2^{[\ell]}$: the set $\mathcal{W}_1^{[\ell]}$ consists of those elements $(r,\bv)\in \mathcal{W}$ such that $v_\ell\not=0$, and $\mathcal{W}_2^{[\ell]}$ consists of those elements such that $v_\ell=0$. We draw at most one edge from each element $(r,\bv)\in \mathcal{W}_1^{[\ell]}$ into $\mathcal{W}_2^{[\ell]}$ as follows. If $v_\ell$ is the only non-zero entry in $\bv$, we draw no edge from $(r,v_\ell)$, so we have an isolated vertex. Suppose two or more entries of $\bv$ are non-zero.  We define $\ell_2\in\{1,2,\ldots,k\}$ to be the next entry in $\bv$ after the $\ell$th that is non-zero, taken cyclically. Let $w\in\{1,2,\ldots,n-1\}$ be such that $w\equiv v_\ell+v_{\ell_2}\bmod n-1$. Define $\bv'=v'_1v'_2\cdots v'_k$ by
\[
v'_i=\begin{cases}
v_i&\text{if }i\in\{1,2,\ldots,k\}\setminus\{\ell,\ell_2\},\\
0&\text{if }i=\ell,\\
w&\text{if }i=\ell_2.
\end{cases}
\]
Let $r'\in\{1,2,\ldots,n\}$ be such that $r'\equiv r+v_\ell\bmod n$. We join $(r,\bv)$ to $(r',\bv')$.

Let $\mathcal{C}^{[\ell]}$ be the set of connected components of the graph $\Gamma^{[\ell]}$. We note that $\Gamma^{[\ell]}$ has exactly $n$ isolated vertices, namely the vectors of the form $(r,\bv)$ where $r\in\{1,2,\ldots,n\}$ and where $\bv$ is the single vector defined by
\[
v_i=\begin{cases}
0&\text{if }i\not=\ell,\\
n-1&\text{if }i=\ell.
\end{cases}
\]
The remaining components in $\mathcal{C}^{[\ell]}$ are stars consisting of a central vertex in $\mathcal{W}^{[\ell]}_2$ and $n-1$ other vertices all lying in $\mathcal{W}^{[\ell]}_1$. Moreover, we note that if $(r,\bv)$ and $(r',\bv')$ are distinct vertices in the same component of $\Gamma^{[\ell]}$ then $r\not=r'$.

We claim that the number of vertices $(r,\bv)\in\mathcal{W}^{[\ell]}_1$ is $n^k$. To see this, we note that there are $n$ choices for $r$, and then $n^{k-1}$ choices for $v_1,v_2,\ldots,v_{\ell-1},v_{\ell+1},\ldots,v_k$. Once these choices are made $v_\ell\in\{0,1,\ldots,n-1\}$ is determined, since $v_\ell\not=0$ and $\sum_{i=1}^kv_i\equiv 0\bmod n-1$. This establishes our claim.

Since every component of $\Gamma^{[\ell]}$ contains a vertex in $\mathcal{W}^{[\ell]}_1$, we see that $|\mathcal{C}^{[\ell]}|\leq |\mathcal{W}^{[\ell]}_1|=n^k$. Indeed, the number of components of $\Gamma^{[\ell]}$ is:
\[
|\mathcal{C}^{[\ell]}|=n+(|\mathcal{W}^{[\ell]}_1|-n)/(n-1)=n(1+(n^{k-1}-1)/(n-1)).
\]

\begin{construction}
\label{con:sunjafar}
Suppose that $n^k\mid R$. Suppose there are $n$ servers, each storing the whole database.
\begin{itemize}
\item A user who requires Record $\ell$ proceeds as as follows. In the notation defined above, for each $i\in\{1,2,\ldots,k\}\setminus\{\ell\}$ the user chooses (uniformly and independently) a random injection $f_i:\mathcal{C}^{[\ell]} \rightarrow \{1,2,\ldots,n^{k}\}$.  The user chooses (again uniformly and independently) a random bijection $\psi: \mathcal{W}^{[\ell]}_1\rightarrow \{1,2,\ldots,n^{k}\}$.

Define integers $b_i(r,\bv)\in\{0,1,\ldots ,n^k\}$ for $(r,\bv)\in\mathcal{W}$ and $i\in\{1,2,\ldots,k\}$ as follows. If $i\not=\ell$, define
\[
b_i(r,\bv)=\begin{cases}
0&\text{if }v_i=0,\\
f_i(C)&\text{if }v_i\not=0 \text{ and $(r,\bv)$ lies in the component $C\in\mathcal{C}$}.
\end{cases}
\]
Note that when $i=\ell$ we have that $v_i\not=0$ if and only if $(r,\bv)\in\mathcal{W}_1^{[\ell]}$. So when $i=\ell$ we may define
\[
b_i(r,\bv)=\begin{cases}
0&\text{if }v_i=0,\\
\psi((r,\bv))&\text{if }v_i\not=0.
\end{cases}
\]

For $r=1,2,\ldots,n$, server $S_r$ is sent the vector $q_r=(b_i(r,\bv):\bv\in\mathcal{V}, i\in\{1,2,\ldots,k\})$.
\item The server $S_r$ replies with the blocks
\begin{equation}
\label{eqn:sj_combination}
s_{(r,\bv)}=\sum_{i=1}^{k}\pi_{b_i(r,\bv)}(X_i)
\end{equation}
for all $\bv\in\mathcal{V}$.
\item To recover block $j$ of $X_\ell$, the user finds $(r,\bv)=\psi^{-1}(j)\in\mathcal{W}_1^{[\ell]}$. Let $C\in\mathcal{C}^{[\ell]}$ be the component containing $(r,\bv)$. If $|C|>1$, let $(r',\bv')\in C\cap \mathcal{W}_2^{[\ell]}$. Then (see below for justification)
\[
\pi_j(X_\ell)=\begin{cases}
s_{(r,\bv)}&\text{ if }|C|=1,\text{ and}\\
s_{(r,\bv)}\oplus s_{(r',\bv')}&\text{ if }|C|>1.
\end{cases}
\]
\end{itemize}
\end{construction}

\begin{theorem}
\label{thm:construction5}
Construction~\ref{con:sunjafar} is an $n$-server PIR scheme with download complexity $(1-1/n^k)(n/(n-1))R$. The total storage of the scheme is $nkR$. The upload complexity of the scheme is $k^2n^{k}\log n$ bits.
\end{theorem}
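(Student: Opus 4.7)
The plan is to verify correctness, privacy, and the three complexity bounds in turn; the substantive work lies in privacy.

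For correctness I would compute $s_{(r,\bv)}\oplus s_{(r',\bv')}$ term by term. In the case $|C|=1$, $\bv$ is the distinguished weight-one vector with $v_\ell=n-1$, so $b_i(r,\bv)=0$ for $i\neq\ell$ and $b_\ell(r,\bv)=\psi((r,\bv))=j$, giving $s_{(r,\bv)}=\pi_j(X_\ell)$. In the case $|C|>1$, I would check each coordinate $i$: when $i\notin\{\ell,\ell_2\}$ the equality $v_i=v'_i$ means that $\pi_{b_i(r,\bv)}(X_i)$ and $\pi_{b_i(r',\bv')}(X_i)$ are both the all-zero string (if $v_i=0$) or both equal $\pi_{f_i(C)}(X_i)$ (using $i\neq\ell$ and that $(r,\bv),(r',\bv')$ share the component $C$), so each such pair cancels; when $i=\ell_2$ the same cancellation holds because $v_{\ell_2},v'_{\ell_2}\neq 0$ and $\ell_2\neq\ell$; and the only surviving term comes from $i=\ell$, where $b_\ell(r,\bv)=j$ while $b_\ell(r',\bv')=0$, contributing $\pi_j(X_\ell)$.

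For privacy I would show the law of $q_r$ does not depend on $\ell$. The crucial structural fact, already noted in the excerpt, is that distinct vertices in any one component of $\Gamma^{[\ell]}$ carry distinct $r$-coordinates; hence for fixed $r$ the assignment $\bv\mapsto C_{r,\bv}$ injects $\mathcal{V}$ into $\mathcal{C}^{[\ell]}$. Using this I would argue that the $k$ columns $(b_i(r,\bv))_{\bv\in\mathcal{V}}$ of $q_r$ are mutually independent and each is a uniform random injection from its support $\{\bv\in\mathcal{V}:v_i\neq 0\}$ into $\{1,2,\ldots,n^k\}$. For $i=\ell$ this is because the restriction of the uniform random bijection $\psi$ to any subset of $\mathcal{W}_1^{[\ell]}$ is a uniform random injection. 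For $i\neq\ell$ it is because $\bv\mapsto f_i(C_{r,\bv})$ is a fixed injection composed with a uniform random injection, which is itself a uniform random injection by a routine counting argument. Independence of the columns follows from the independence of $\psi$ and the $f_i$. Since the resulting description of $q_r$ mentions only which $\bv\in\mathcal{V}$ have $v_i\neq 0$, it is manifestly $\ell$-independent.

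For the complexity claims I would use $|\mathcal{V}|=(n^k-1)/(n-1)$, which follows either by a direct count or (as in the excerpt) from $|\mathcal{W}|=n|\mathcal{V}|$ together with the component enumeration. Each of the $n$ servers emits $|\mathcal{V}|$ strings of length $R/n^k$, so the download is $n|\mathcal{V}|R/n^k=(1-1/n^k)(n/(n-1))R$. Total storage is $nkR$ because every server stores the whole database. Each server receives $k|\mathcal{V}|$ integers from $\{0,1,\ldots,n^k\}$, each of bit-length $O(k\log n)$, which yields the claimed upload bound. The main obstacle is the privacy step: it demands showing that the intricate combinatorics of $\Gamma^{[\ell]}$ collapses, at the level of any one server's view, to a product of $k$ independent uniform random injections whose supports do not depend on $\ell$. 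Once the decoupling observation (fixed $r$ selects at most one vertex per component) is in hand, the rest reduces to elementary counting for restrictions and compositions of uniform random maps.
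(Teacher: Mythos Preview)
Your proposal is correct and follows essentially the same route as the paper: correctness via termwise cancellation using the component structure of $\Gamma^{[\ell]}$, privacy via the observation that for fixed $r$ the map $\bv\mapsto C_{r,\bv}$ is injective so that each column of $q_r$ is an independent uniform random injection (the paper phrases this as $q_r$ being uniform on a set $\mathcal{A}$ defined by exactly those constraints), and the complexity counts via $|\mathcal{V}|=(n^k-1)/(n-1)$.

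One small point: your upload calculation is slightly loose. Sending all $k|\mathcal{V}|$ integers per server at $k\log n$ bits each gives $k^2 n(n^k-1)/(n-1)\log n$ total, not $k^2 n^k\log n$. The paper obtains the exact claimed figure by observing that the integers $b_i(r,\bv)$ with $v_i=0$ are always zero and need not be transmitted, and then counting that for each server there are exactly $kn^{k-1}$ pairs $(i,\bv)$ with $v_i\neq 0$ (for fixed $i$ there are $n^{k-1}$ such $\bv$, since the other $k-1$ coordinates may be chosen freely and $v_i$ is then determined and nonzero). This gives $n\cdot kn^{k-1}\cdot k\log n=k^2 n^k\log n$.
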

\begin{proof}
We begin by establishing correctness of the scheme. Let $(r,\bv)=\psi^{-1}(j)$ and let $C\in \mathcal{C}^{[\ell]}$ be the component containing $(r,\bv)$. When $|C|=1$ we have $v_i\not=0$ if and only if $i=\ell$ and so
\[
s_{(r,\bv)}=\sum_{i=1}^{k}\pi_{b_i(r,\bv)}(X_i)=\pi_{b_\ell(r,\bv)}(X_\ell)=\pi_j(X_\ell),
\]
the last equality following since $b_\ell(r,\bv)=j$. Hence the user recovers the $j$th block $\pi_j(X_\ell)$ of $X_\ell$ correctly in this case. Suppose now that $C$ contains two or more vertices, so there exists $(r',\bv')\in C\cap \mathcal{W}_2^{[\ell]}$. When $i\not=\ell$, the values of $b_i(r,\bv)$ and $b_i(r',\bv')$ are equal, since $(r,\bv)$ and $(r',\bv')$ lie in the same component $C$ of $\Gamma^{[\ell]}$ and since $v_i=0$ if and only if $v'_i=0$. Moreover, $v_\ell\not=0$ and $v'_\ell=0$. Hence
\begin{align*}
s_{(r,\bv)}\oplus s_{(r',\bv')}&=\sum_{i=1}^k \left(\pi_{b_i(r,\bv)}(X_i)\oplus\pi_{b_i(r',\bv')}(X_i)\right)\\
&=\pi_{b_\ell(r,\bv)}(X_\ell)\oplus \pi_{b_\ell(r',\bv')}(X_\ell))\\
&=\pi_{\psi((r,\bv))}(X_\ell)\oplus\pi_{0}(X_\ell))\\
&=\pi_{j}(X_\ell).
\end{align*}
So the user recovers the $j$th block $\pi_j(X_\ell)$ of $X_\ell$ correctly in this case also. We have established correctness.

We now aim to establish the security of the scheme. Let $\mathcal{A}$ be the set of integer vectors $(a_i(\bv)\in\{0,1,\ldots,n^k\}:i\in\{1,2,\ldots,k\}, \bv\in\mathcal{V})$ with the restrictions that $a_i(\bv)=0$ if and only if $v_i=0$, and that for any fixed $i\in\{1,2,\ldots,k\}$ the integers $a_i(\bv)$ with $v_i\not=0$ are distinct.
Let $r\in \{1,2,\ldots,n\}$ be fixed. The query $q_r=(b_i(r,\bv):\bv\in \mathcal{V},i\in\{1,2,\ldots,k\})$ lies in $\mathcal{A}$, since the functions $f_i$ and $\psi$ are injective and since (whether or not $i=\ell$) we have $b_i(r,\bv)=0$ if and only if $v_i=0$. Indeed, the query is uniformly distributed in $\mathcal{A}$. To see this, first note that the functions $f_i$ (for $i\not=\ell$) and $\psi$ are chosen independently. The values $b_\ell(r,\bv)$ for $v_\ell\not=0$ are uniform subject to being distinct since $\psi$ is a randomly chosen bijection. For $i\not=\ell$, the values $b_i(r,\bv)$ for $v_\ell\not=0$ are uniform subject to being distinct, since $f_i$ is a uniformly chosen injection from $\mathcal{C}^{[\ell]}$, and since at most one vertex in any component $C\in\mathcal{C}^{[\ell]}$ has its first entry equal to $r$. Hence the distribution of query $q_r$ is uniform on $\mathcal{A}$ as claimed. Since this distribution does not depend on $\ell$, privacy follows.

Each server replies with $|\mathcal{V}|$ strings, each string of length $R/n^k$. Since there are $n$ servers, the download complexity is $nR|\mathcal{V}|/n^k$. So it remains to determine $|\mathcal{V}|$. For $0\leq s\leq k-1$, there are $n^{k-s-1}$ elements $v_1v_2\cdots v_k\in\mathcal{V}$ that begin with exactly $s$ zeros, since we may choose $v_{s+2},v_{s+3},\ldots,v_{k}\in\{0,1,\ldots,n-1\}$ arbitrarily and then $v_{s+1}$ is determined by the fact it is non-zero and $\sum_{j=1}^kv_j\equiv 0 \bmod n-1$. So
\[
|\mathcal{V}|=\sum_{s=0}^{k-1}n^{k-s-1}=(n^k-1)/(n-1)
\]
and the download complexity is $(1-1/n^k)(n/(n-1))R$, as required.

We may argue that the total upload complexity is $k^2n^{k}\log n$ as follows. Consider Server $S_r$. The integers $b_i(r,\bv)$ with $v_i=0$ are zero, and so do not need to be sent. There are exactly $kn^{k-1}$ integers $b_i(r,\bv)\in\{1,2,\ldots,n^k\}$ with $i\in\{1,2,\ldots,k\}$ and $\bv\in\mathcal{V}$ with $v_i\not=0$. (To see this, note that there are $k$ choices for $i$, and $n$ choices for each component $\bv$ except the $\ell$th. But then $v_\ell$ is determined by the fact that it is non-zero and $\sum_{j=1}^kv_j\equiv 0\bmod n-1$.) Each integer can be specified using $k\log n$ bits, and so the query $q_r$ is $k^2n^{k-1}\log n$ bits long. Since there are $n$ servers, the total upload complexity is $k^2n^{k}\log n$ bits, as required.
\end{proof}

\subsection{An averaging technique}
\label{subsec:averaging}

The download complexity of both the PIR scheme due to Sun and Jafar~\cite{SuJa16d} and the scheme in Construction~\ref{con:sunjafar} above is $(1-1/n^k)(n/(n-1))R$. This is only slightly smaller than the more practical scheme in Construction~\ref{con:smallnumservers}, which has download complexity $(n/(n-1))R$. In fact, the \emph{expected} number of bits downloaded in Construction~\ref{con:smallnumservers} is $(1-1/n^k)(n/(n-1))R$, since a server is asked for an all-zero linear combination of blocks with probability $1/n^k$ and need not reply in this case. This section describes an `averaging' technique which transforms Construction~\ref{con:smallnumservers} into a scheme with good (worst case) download complexity, at the price of a much stronger divisibility constraint on the length of blocks. This technique will work for a wide range of PIR schemes, but in the case of Construction~\ref{con:smallnumservers} it produces a scheme with optimal download complexity $(1-1/n^k)(n/(n-1))R$. Moreover, the upload complexity is considerably smaller than the schemes described in~\cite{SuJa16d} and Construction~\ref{con:sunjafar}.

Before giving the detail, we describe the general idea. Chan, Ho and Yamamoto~\cite[Remark 2]{CHY15} observed that a PIR scheme with good upload complexity (but long record lengths) can be constructed by dividing each record into blocks, then using copies of a fixed PIR scheme for shorter records operating on each block in parallel. Crucially, the same randomness (and so the same queries) can be used for each parallel copy of the scheme, and so upload complexity is low. The `averaging' construction operates in a similar way. However, rather than using the same randomness we use different but predictably varying randomness for each parallel copy. The server can calculate queries for each copy of the scheme from just one query, so upload complexity remains low. But (because queries vary over all possibilities) the resulting scheme has (worst case) download complexity equal to the average number of bits of download in the Chan, Ho and Yamamoto construction.

In more detail, we modify Construction~\ref{con:smallnumservers} as follows. Suppose that $n^k(n-1)\mid R$. We divide an $R$-bit string $X$ into $n^k(n-1)$ blocks, each of size $R/(n^k(n-1))$. We index these blocks by pairs $(b,\bx)$ where $b\in\{1,2,\ldots,n-1\}\subseteq\bZ_n$ and $\bx\in\bZ_n^k$. We write $\pi_{(b,\bx)}(X)$ for the block of $X$ that is indexed by $(b,\bx)$. For any $\bx\in \bZ_n^k$, we write $\pi_{(0,\bx)}(X)$ for the all-zero string $0^{R/(n^k(n-1))}$ of length $R/(n^k(n-1))$.

\begin{construction}
\label{con:averaging}
Let $n$ be an integer such that $n^k(n-1)\mid R$. Suppose there are $n$ servers, each storing the entire database.
\begin{itemize}
\item A user who requires Record $\ell$ chooses $k$ elements $a_1,a_2,\ldots,a_k\in\mathbb{Z}_{n}$ uniformly and independently at random. For $r=1,\ldots,n$, server $S_r$ is sent the vector $q_r=(b_{1r},b_{2r},\ldots,b_{kr})\in\mathbb{Z}_{n}^k$, where
\[
b_{ir}=\begin{cases}
a_i+r\bmod n&\text{if }i=\ell,\\
a_i&\text{otherwise}.
\end{cases}
\]
\item For $r\in\{1,2,\ldots,n\}$ and $\bx\in\bZ_n^k$, define the string $c_{(r,\bx)}$ of length $R/(n^k(n-1))$ by
\[
c_{(r,\bx)}=\bigoplus_{i=1}^k \pi_{(b_{ir}+x_i,\bx)}(X_i).
\]
The server $S_r$ returns the string $c_{(r,\bx)}$, for all $\bx=(x_1,x_2,\ldots,x_k)\in\bZ_n^k$ such that $\bx+q_r\not=\mathbf{0}$. So $S_r$ returns $n^k-1$ strings.
\item To recover the block of $X_\ell$ indexed by a pair $(j,\bx)$, the user finds the integers $r$ and $r'$ such that $b_{\ell r}+x_\ell=0$ and $b_{\ell r'}+x_{\ell}=j$. The user then computes $c_{(r,\bx)}\oplus c_{(r',\bx)}$.
\end{itemize}
\end{construction}
\begin{theorem}
\label{thm:construction6}
Construction~\ref{con:averaging} is an $n$-server PIR scheme with download complexity $(1-1/n^k)\frac{n}{n-1}R$. The scheme has upload complexity $nk\log n$ and total storage is $nkR$.
\end{theorem}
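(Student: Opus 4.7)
The plan is to mirror the proof of Theorem~\ref{thm:construction3}, adding one extra ingredient to deal with the fact that each server now omits exactly one of its $n^k$ potential replies. I would prove correctness, privacy, and the three complexity claims in turn.

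For correctness, fix $j\in\{1,\ldots,n-1\}$ and $\bx\in\bZ_n^k$; I must show the user recovers $\pi_{(j,\bx)}(X_\ell)$. First, as in Theorem~\ref{thm:construction3}, the required $r,r'\in\{1,\ldots,n\}$ exist and are unique, because $b_{\ell r}+x_\ell=a_\ell+r+x_\ell$ ranges over all of $\bZ_n$ as $r$ varies. Next, for $i\neq\ell$ the convention $b_{ir}=b_{ir'}=a_i$ gives
\[
\pi_{(b_{ir}+x_i,\bx)}(X_i)\oplus\pi_{(b_{ir'}+x_i,\bx)}(X_i)=0,
\]
while for $i=\ell$ we obtain $\pi_{(0,\bx)}(X_\ell)\oplus\pi_{(j,\bx)}(X_\ell)=\pi_{(j,\bx)}(X_\ell)$ by the zero-block convention. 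Summing gives $c_{(r,\bx)}\oplus c_{(r',\bx)}=\pi_{(j,\bx)}(X_\ell)$ exactly as required.

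The new subtlety is that $S_r$ might omit $c_{(r,\bx)}$ (when $\bx+q_r=\mathbf{0}$), or $S_{r'}$ might omit $c_{(r',\bx)}$. I would observe two things: (a) if $\bx+q_r=\mathbf{0}$ then every summand in $c_{(r,\bx)}$ is $\pi_{(0,\bx)}(X_i)=0$, so treating the omitted string as the all-zero string is consistent; and (b) at most one of the two responses can be omitted, because the $\ell$th coordinate of $\bx+q_{r'}$ equals $b_{\ell r'}+x_\ell=j\neq 0$, forcing $\bx+q_{r'}\neq\mathbf{0}$, and symmetrically. Hence the user can reconstruct $\pi_{(j,\bx)}(X_\ell)$ from the transmitted strings (substituting $0$ for any omitted one).

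Privacy is immediate: the query $q_r=(b_{1r},\ldots,b_{kr})$ is uniform on $\bZ_n^k$ independently of $\ell$, by the same shift argument used in the proof of Theorem~\ref{thm:construction3}, and the decision of whether a given response is omitted is a deterministic function of $q_r$ only. For the complexities, each of the $n$ servers transmits $n^k-1$ strings of length $R/(n^k(n-1))$, giving download $n(n^k-1)R/(n^k(n-1))=(1-1/n^k)\tfrac{n}{n-1}R$; each query consists of $k$ elements of $\bZ_n$, giving upload $nk\log n$; and each server stores the entire $kR$-bit database, giving total storage $nkR$. The main potential obstacle is the correctness argument in the presence of omitted responses, but as outlined this reduces to the simple observation that an omitted $c_{(r,\bx)}$ is always the all-zero string and that at most one of the two needed responses is ever omitted.
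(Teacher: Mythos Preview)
Your proposal is correct and follows essentially the same approach as the paper: verify correctness by the same block-wise XOR computation as in Theorem~\ref{thm:construction3}, handle the omitted response by observing it would be the all-zero string, reduce privacy to that of Construction~\ref{con:smallnumservers}, and compute the three complexities directly. One small remark: your phrase ``and symmetrically'' in point~(b) is slightly misleading, since the argument is not symmetric---$c_{(r',\bx)}$ is always transmitted (its $\ell$th coordinate is $j\neq0$) while $c_{(r,\bx)}$ may genuinely be omitted; but your conclusion that at most one is omitted is correct, and in fact point~(a) alone already suffices for correctness, which is all the paper uses.
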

\begin{proof}
We begin with the correctness of the scheme. Exactly as in the proof of Theorem~\ref{thm:construction3}, we note that $r$ and $r'$ exist since $b_{\ell r}+x_\ell\in\{0,1,2\ldots,n-1\}$ takes on each possible value once as $r\in\{0,1,\ldots,n\}$ varies. Moreover, we note that the string $c_{(r,\bx)}$ is all zero if $\bx+q_r=0$ (and similarly the string  $c_{(r',\bx)}$ is all zero if  $\bx+q_{r'}=0$) and so the user always receives enough information to calculate $c_{(r,\bx)}\oplus c_{(r',\bx)}$.

Let $\bx=(x_1,x_2,\ldots,x_k)$. When $i\not=\ell$
\[
\pi_{(b_{ir}+x_i,\bx)}(X_i)\oplus \pi_{(b_{ir'}+x_i,\bx)}(X_i)=\pi_{(a_{i}+x_i,\bx)}(X_i)\oplus \pi_{(a_{i}+x_i,\bx)}(X_i)=0^{R/(n-1)}.
\]
When $i=\ell$
\[
\pi_{(b_{ir}+x_i,\bx)}(X_i)\oplus \pi_{(b_{ir'}+x_i,\bx)}(X_i)=\pi_{(0,\bx)}(X_i)\oplus \pi_{(j,\bx)}(X_i)=\pi_{(j,\bx)}(X_i)=
\pi_{(j,\bx)}(X_\ell).
\]
Hence
\[
c_{(r,\bx)}\oplus c_{(r',\bx)}=\bigoplus_{i=1}^k(\pi_{(b_{ir}+x_i,\bx)}(X_i)\oplus \pi_{(b_{ir'}+x_i,\bx)}(X_{i}))=\pi_{(j,\bx)}(X_{\ell}).
\]
So the user recovers the block of $X_{\ell}$ indexed by $(j,\bx)$ correctly.

Privacy follows from the privacy of Construction~\ref{con:smallnumservers}, as the method for generating queries is identical.

The total storage is $nkR$, since each of $n$ servers stores the entire $kR$-bit database. Each query $q_r$ is $k\lceil \log n\rceil $ bits long, since an element of $\mathbb{Z}_n$ may be specified using $ \log n$ bits. Hence the upload complexity is $nk\log n$. Since there are $n$ servers, and each server returns $n^k-1$ strings of length $R/(n^k(n-1))$, the download complexity is $(1-1/n^k)\frac{n}{n-1}R$.
\end{proof}

\section{Conclusions and future work}
\label{sec:conclusion}

In this paper, we have used classical PIR techniques to prove bounds on the download complexity of PIR schemes in modern models, and we have presented various constructions for PIR schemes which are either simpler or perform better than previously known schemes. The characteristics of the six constructions in this paper are summarised in Fig~\ref{fig:summary}, and parameters for the schemes in~\cite{SRR14} and~\cite{SuJa16b} are included for comparison.

\begin{figure}
{\tiny
\begin{tabular}{c|c|c|c|c|c}
&Download&Upload&Restrictions&Comments\\\hline
\cite{SRR14}&$R+1$&$R(R+1)$&$n=(R-1)^k$&Algorithm~1 and~2 in~\cite{SRR14}\\
\cite{SRR14}&$R+1$&$k(R+1)\log(R+1)$&$n=R+1$&End of~\cite[Sec.~IV]{SRR14}\\
\cite{SRR14}&$(2\Delta/(\Delta-(k-1)))R$&$(\Delta^2/(\Delta-(k-1)))R$&$n\geq 2\Delta$, $\Delta\geq 2k$&Algorithm~3 in~\cite{SRR14}; linear storage\\
\cite{SuJa16b}&$(1-1/n^k)(n/(n-1)R$&$k^2n^k\log n$&$n^k|R$&Optimal asymptotic download; recursive\\
1&$R+1$&$kR(R+1)$&$n=R+1$&Generalisation of~\cite{CGKS98}\\
2&$R+1$&$k(R+1)\log(R+1)$&$n=R+1$&Similar to~\cite[Sect.~IV]{SRR14}; improved expected download\\
3&$\frac{n}{n-1}R$&$nk\log n$&$(n-1)|R$&Optimal download for $n$ servers\\
4&$\frac{t}{t-1}R$&$nk\log t$&$s|R$, $(t-1)|s$, $n=tR/s$&Each server stores only $ks$ bits\\
5&$(1-1/n^k)(n/(n-1)R$&$k^2n^k\log n$&$n^k|R$&Optimal asymptotic download; non-recursive\\
6&$(1-1/n^k)(n/(n-1)R$&$nk\log n$&$n^k(n-1)|R$&Optimal asymptotic download; improved upload\\
\end{tabular}}
\caption{Summary of the six constructions in this paper and those in~\cite{SRR14,SuJa16b}}
\label{fig:summary}
\end{figure}

Various interesting problems remain in this area. We first consider schemes with optimal download complexity:

\begin{problem}
Are there PIR schemes with fewer than $R+1$ bits of download complexity?
\end{problem}
Our paper, like the rest of the literature, only considers PIR schemes over binary channels, and in this model the answer is `no'. But the proofs of this fact in this paper and in Shah at el.~\cite{SRR14} both use the fact that we are working over binary channels: more than $R$ bits of download implies that at least $R+1$ bits are downloaded. So this problem is still open if we extend the model to schemes that do not necessarily use binary channels.

We now return to the standard binary channel model.

\begin{problem}
\label{prob:Shah}
Are there PIR schemes with download complexity $R+1$ and total storage linear in $R$?
\end{problem}
This result was claimed in Shah at el.~\cite{SRR14}, but we believe that a proof of this is still not known. A proof of this result might depend on a more detailed structural analysis of PIR schemes with $R+1$ bits of download. As a first step, we believe the following to be of interest:

\begin{problem}
Theorem~\ref{thm:almostall} bounds the probability that only $R$ bits are downloaded in a PIR scheme with (worst case) download complexity $R+1$. Is this bound tight?
\end{problem}
We conjecture that the bound could be significantly improved in some cases.

We now consider families of schemes that have good asymptotic complexity as $R\rightarrow\infty$.

\begin{problem}
Does there exist a family of schemes with download complexity $(1+o(R))R$ and linear total storage?
\end{problem}
Note that an affirmative solution to Question~\ref{prob:Shah} will imply an affirmative solution to this question.

\begin{problem}
Are there practical PIR schemes that approach asymptotic capacity as $R$ grows?
\end{problem}
The schemes by Sun and Jafar~\cite{SuJa16b} and the related schemes presented in this paper have the strong restriction that $n^k$ must divide $R$.

\begin{problem}
Is there a combinatorial proof that provides a tight upper bound on the asymptotic capacity as $R\rightarrow\infty$?
\end{problem}
We comment that the proof in Sun and Jafar~\cite{SuJa16b} uses information theoretic techniques.
A combinatorial proof might give extra structural information for schemes meeting the bound, and might improve the bound in non-asymptotic cases.

Finally, we turn to larger questions. It is clearly very important to construct schemes with practical parameter sizes, which can work in real-life distributed storage settings. In particular, the following problems are key.

\begin{problem}
Can we find better constructions for PIR schemes?
\end{problem}
Schemes are of interest if they improve per server storage, total storage, upload or download complexity, if the number of servers needed was reduced, or if the divisibility conditions for parameters such as $R$ are weakened.
\begin{problem}
Can the techniques from this paper be applied to establish bounds or give constructions in other models, such as those discussed in Subsection~\ref{subsec:context}?
\end{problem}
In particular, can these constructions be adapted to work when the database is coded (in order to provide robustness against server failure, for example)?

\paragraph{Acknowledgement} The authors would like to thank Doug Stinson for comments on an earlier draft.

\end{document}